% Moment-based parameter estimation in binomial random intersection graph models
%
% Version 1.02, Equation (3.1) corrected.
% Version 1.01, corrections made for WAW 2017 on 2017-05-30.
%

\documentclass{llncs}
\usepackage{makeidx}
\usepackage[latin1]{inputenc}
\usepackage[T1]{fontenc}
\usepackage{amsmath, amssymb}
\usepackage{enumerate}
\usepackage[colorlinks=true, allcolors=blue]{hyperref}
\usepackage{color}
\usepackage{xspace}
\usepackage{comment}
\usepackage{booktabs}
\usepackage{graphicx}
\usepackage{adjustbox}
\usepackage{subcaption}

\numberwithin{equation}{section}
\numberwithin{theorem}{section}
\newtheorem{mylemma}[theorem]{Lemma}

\definecolor{defcolor}{rgb}{0,0,1}

\newcommand{\cond}{\, | \,}

\newcommand{\E}{\mathbb{E}}
\newcommand{\pr}{\mathbb{P}}
\newcommand{\Var}{\operatorname{Var}}

\newcommand{\weq}{\ = \ }

\newcommand{\wle}{\ \le \ }

\newcommand{\wsim}{\ \sim \ }

\newcommand{\cA}{\mathcal{A}}
\newcommand{\cB}{\mathcal{B}}
\newcommand{\cC}{\mathcal{C}}
\newcommand{\cD}{\mathcal{D}}

\newcommand{\cS}{\mathcal{S}}

\newcommand{\cV}{\mathcal{V}}

\newcommand{\Pow}{\mathrm{Pow}}
\newcommand{\mcf}{\mathrm{MCF}}
\newcommand{\Gsub}{G^{(n_0)}}
\newcommand{\tre}{t}
\newcommand{\trm}{\tau}
\newcommand{\degmax}{d_{\rm max}}
\newcommand{\Vz}{V^{(n_0)}}

%%% TABLE COMMANDS START %%%

\newcommand{\TableThreeCycle}{
\begin{tabular}{lcc}
\toprule
3-cycle & $|\cC|$ & $||\cC||$ \\
\midrule
123 & 1 & 3 \\
12, 13, 23 & 3 & 6 \\
\bottomrule
\end{tabular}
}

\newcommand{\TableThreePath}{
\begin{tabular}{lcc}
\toprule
3-path & $|\cC|$ & $||\cC||$ \\
\midrule
1234 & 1 & 4 \\
123, 34 & 2 & 5 \\
234, 12 & 2 & 5 \\
12, 23, 34 & 3 & 6 \\
\bottomrule
\end{tabular}
}

\newcommand{\TableThreePan}{
\begin{tabular}{lcc}
\toprule
3-pan & $|\cC|$ & $||\cC||$ \\
\midrule
1234 & 1 & 4 \\
123, 34 & 2 & 5 \\
134, 12, 23 & 3 & 7 \\
234, 12, 13 & 3 & 7 \\
12, 13, 23, 34 & 4 & 8\\
\bottomrule
\end{tabular}
}

\newcommand{\TableFourPath}{
\begin{tabular}{lcc}
\toprule
4-path & $|\cC|$ & $||\cC||$ \\
\midrule
12345       & 1 & 5 \\
1234, 45   & 2 & 6 \\
2345, 12   & 2 & 6 \\
123, 345         & 2 & 6 \\
1245, 234       & 2 & 7 \\
123, 34, 45     & 3 & 7 \\
234, 12, 45     & 3 & 7 \\
345, 12, 23     & 3 & 7 \\
1245, 23, 34   & 3 & 8 \\
12, 23, 34, 45 & 4 & 8 \\
\bottomrule
\end{tabular}
}

\newcommand{\TableFourCycle}{
\begin{tabular}{lcc}
\toprule
4-cycle & $|\cC|$ & $||\cC||$ \\
\midrule
1234 & 1 & 4 \\
123, 134 & 2 & 6 \\
124, 234 & 2 & 6 \\
123, 14, 34 & 3 & 7 \\
124, 23, 34 & 3 & 7 \\
134, 12, 23 & 3 & 7 \\
234, 12, 14 & 3 & 7 \\
12, 14, 23, 34 & 4 & 8 \\
\bottomrule
\end{tabular}
}

\newcommand{\TableDiamond}{
\begin{tabular}{lcc}
\toprule
Diamond & $|\cC|$ & $||\cC||$ \\
\midrule
1234 & 1 & 4 \\
123, 234 & 2 & 6 \\
123, 24, 34 & 3 & 7 \\
234, 12, 13 & 3 & 7 \\
124, 134, 23 & 3 & 8 \\
124, 13, 23, 34 & 4 & 9 \\
134, 12, 23, 24 & 4 & 9 \\
12, 13, 23, 24, 34 & 5 & 10 \\
\bottomrule
\end{tabular}
}

\newcommand{\TableChair}{
\begin{tabular}{lcc}
\toprule
Chair & $|\cC|$ & $||\cC||$ \\
\midrule
12345     & 1 & 5 \\
1234, 45 & 2 & 6 \\
1345, 23 & 2 & 6 \\
2345, 13 & 2 & 6 \\
123, 345 & 2 & 6 \\
123, 34, 45 & 3 & 7 \\
134, 23, 45 & 3 & 7 \\
234, 13, 45 & 3 & 7 \\
345, 13, 23 & 3 & 7 \\
13, 23, 34, 45 & 4 & 8 \\
\bottomrule
\end{tabular}
}

\newcommand{\TableButterfly}{
\begin{tabular}{lcc}
\toprule
Butterfly & $|\cC|$ & $||\cC||$ \\
\midrule
12345 & 1 & 5 \\
123, 345 & 2 & 6 \\
1234, 35, 45 & 3 & 8 \\
1235, 34, 45 & 3 & 8 \\
1345, 12, 23 & 3 & 8 \\
2345, 12, 13 & 3 & 8 \\
1245, 134, 235 & 3 & 10 \\
1245, 135, 234 & 3 & 10 \\
123, 34, 35, 45 & 4 & 9 \\
345, 12, 13, 23 & 4 & 9 \\
1245,  134, 23, 25 & 4 & 11 \\
1245,  235 ,13, 34 & 4 & 11 \\
1245,  135, 23, 34 & 4 & 11 \\
1245,  234, 13, 35 & 4 & 11 \\
134, 12, 23, 35, 45 & 5 & 11 \\
135, 12, 23, 34, 45 & 5 & 11 \\
234, 12, 13, 35, 45 & 5 & 11 \\
235, 12, 13, 34, 45 & 5 & 11 \\
1245, 13, 23, 34, 35 & 5 & 12 \\
12, 13, 23, 34, 35, 45 & 6 & 12 \\
\bottomrule
\end{tabular}
}

%%% TABLE COMMANDS END %%%

\begin{document}

\title{Moment-based parameter estimation in binomial random intersection graph models}

\mainmatter
\title{Moment-based parameter estimation in binomial random intersection graph models}
\titlerunning{Parameter estimation in random intersection graphs}
\author{Joona Karjalainen \and Lasse Leskel\"a}
\authorrunning{J. Karjalainen and L. Leskel\"a}
\tocauthor{Joona Karjalainen, Lasse Leskel\"a}
\institute{
Aalto University, Espoo, Finland  \\
 \href{http://math.aalto.fi/en/people/joona.karjalainen}{\nolinkurl{math.aalto.fi/en/people/joona.karjalainen}} \\
 \href{http://math.aalto.fi/\~lleskela/}{\nolinkurl{math.aalto.fi/\~lleskela/}}
}
\maketitle

\begin{abstract}
Binomial random intersection graphs can be used as parsimonious statistical models of large and sparse networks, with one parameter for the average degree and another for transitivity, the tendency of neighbours of a node to be connected. This paper discusses the estimation of these parameters from a single observed instance of the graph, using moment estimators based on observed degrees and frequencies of 2-stars and triangles. The observed data set is assumed to be a subgraph induced by a set of $n_0$ nodes sampled from the full set of $n$ nodes. We prove the consistency of the proposed estimators by showing that the relative estimation error is small with high probability for $n_0 \gg n^{2/3} \gg 1$. As a byproduct, our analysis confirms that the empirical transitivity coefficient of the graph is with high probability close to the theoretical clustering coefficient of the model.
\end{abstract}

\keywords{statistical network model, network motif, model fitting, moment estimator, sparse graph, two-mode network, overlapping communities}

\section{Introduction}

Random intersection graphs are statistical network models with overlapping communities. In general, an intersection graph on a set of $n$ nodes is defined by assigning each node $i$ a set of attributes $V_i$, and then connecting those node pairs $\{i,j\}$ for which the intersection $V_i \cap V_j$ is nonempty. When the assignment of attributes is random we obtain a random undirected graph. By construction, this graph has a natural tendency to contain strongly connected communities because any set of nodes $W_k = \{i: V_i \ni k\}$ affiliated with attribute $k$ forms a clique.

The simplest nontrivial model is the binomial random intersection graph $G=G(n,m,p)$ introduced in \cite{Karonski_Scheinerman_Singer-Cohen_1999}, having $n$ nodes and $m$ attributes, where any particular attribute $k$ is assigned to a node $i$ with probability $p$, independently of other node--attribute pairs. A statistical model of a large and sparse network with nontrivial clustering properties is obtained when $n$ is large, $m \sim \beta n$ and $p \sim \gamma n^{-1}$ for some constants $\beta$ and $\gamma$. In this case the limiting model can be parameterised by its mean degree $\lambda = \beta \gamma^2$ and attribute intensity $\mu = \beta \gamma$. By extending the model by introducing random node weights, we obtain a statistical network model which is rich enough to admit heavy tails and nontrivial clustering properties \cite{Bloznelis_2013,Bloznelis_Kurauskas_2015,Deijfen_Kets_2009,Godehardt_Jaworski_2001}. Such models can also be generalised to the directed case \cite{Bloznelis_Leskela_2016_long}. An important feature of this class of models is the analytical tractability related to component sizes \cite{Bloznelis_2010_Largest,Lageras_Lindholm_2008} and percolation dynamics \cite{Ball_Sirl_Trapman_2014,Britton_Deijfen_Lageras_Lindholm_2008}.

In this paper we discuss the estimation of the model parameters based on a single observed instance of a subgraph induced by a set of $n_0$ nodes. We introduce moment estimators for $\lambda$ and $\mu$ based on observed frequencies of 2-stars and triangles, and describe how these can be computed in time proportional to the product of the maximum degree and the number of observed nodes.  We also prove that the statistical network model under study has a nontrivial empirical transitivity coefficient which can be approximated by a simple parametric formula in terms of $\mu$. 

The majority of classical literature on the statistical estimation of network models concerns exponential random graph models \cite{Wasserman_Faust_1994}, whereas most of  the recent works are focused on stochastic block models \cite{Bickel_Chen_Levina_2011} and stochastic Kronecker graphs \cite{Gleich_Owen_2012}. For binomial random intersection graphs with $m \ll n$, it has been shown \cite{Nikoletseas_Raptopoulos_Spirakis_2012} that the underlying attribute assignment can in principle be learned using maximum likelihood estimation.  To the best of our knowledge, the current paper appears to be the first of its kind to discuss parameter estimation in random intersection graphs where $m$ is of the same order as $n$.

The rest of the paper is organised as follows. In Section~\ref{sec:Model} we describe the model and its key assumptions. Section~\ref{sec:Results} summarises the main results. Section~\ref{sec:Experiments} describes numerical simulation experiments for the performance of the estimators. The proofs of the main results are given in Section~\ref{sec:Proofs}, and Section~\ref{sec:Conlusions} concludes the paper.

\section{Model description}
\label{sec:Model}

\subsection{Binomial random intersection graph}
The object of study is an undirected random graph $G=G(n,m,p)$
on node set $\{1,2,\dots,n\}$ with adjacency matrix having diagonal entries $A(i,i)=0$ and off-diagonal entries 
\[
 A(i,j) \weq \min \left( \sum_{k=1}^m B(i,k) B(j,k), \ 1 \right),
\]
where $B(i,k)$ are independent $\{0,1\}$-valued random integers with mean $p$, indexed by $i=1,\dots,n$ and $k=1,\dots,m$. The matrix $B$ represents a random assignment of $m$ attributes to $n$ nodes, both labeled using positive integers, so that $B(i,k)=1$ when attribute $k$ is assigned to node $i$. The set of attributes assigned to node $i$ is denoted by $V_i = \{k: 
B(i,k)=1\}$. Then a node pair $\{i,j\}$ is connected in $G$ if and only if the intersection $V_i \cap V_j$ is nonempty.

\subsection{Sparse and balanced parameter regimes}
We obtain a large and sparse random graph model by considering a sequence of graphs $G(n,m,p)$ with parameters $(n,m,p) = (n_\nu, m_\nu, p_\nu)$ indexed by a scale parameter $\nu \in \{1,2,\dots\}$ such\footnote{For number sequences $f=f_\nu$ and $g=g_\nu$ indexed by integers $\nu \ge 1$, we denote $f \sim g$ if $f_\nu/g_\nu \to 1$ and $f \ll g$ if $f_\nu/g_\nu \to 0$ as $\nu \to \infty$. The scale parameter is usually omitted.} that $n \gg 1$ and $p \ll m^{-1/2}$ as $\nu \to \infty$. In this case a pair of nodes $\{i,j\}$ is connected with probability
\[
 \pr( ij \in E(G) )
 \weq 1 - (1-p^2)^m
 \wsim m p^2,
\]
and the expected degree of a node $i$ is given by
\begin{equation}
 \label{eq:MeanDegree}
 \E \deg_G(i)
 \weq (n-1) \pr( ij \in E(G) )
 \wsim n m p^2.
\end{equation}
Especially, we obtain a large random graph with a finite limiting mean degree $\lambda \in (0,\infty)$ when we assume that
\begin{equation}
 \label{eq:Sparse}
 n \gg 1,
 \qquad
 mp^2 \sim \lambda n^{-1}.
\end{equation}
This will be called the \emph{sparse parameter regime} with mean degree $\lambda$.

The most interesting model with nontrivial clustering properties is obtained when we also assume that $p \sim \mu m^{-1}$
for some constant $\mu \in (0,\infty)$. In this case the full set of conditions is equivalent to
\begin{equation}
 \label{eq:Balanced}
 n \gg 1,
 \qquad
 m \sim (\mu^2/\lambda) n,
 \qquad
 p \sim (\lambda/\mu) n^{-1},
\end{equation}
and will be called as \emph{balanced sparse parameter regime} with mean degree $\lambda$ and attribute intensity $\mu$.

\subsection{Induced subgraph sampling}
\label{sec:Sampling}
Assume that we have observed the subgraph $\Gsub$ of $G$ induced by a set $\Vz$ of $n_0$ nodes sampled from the full set of $n$ nodes, so that $E(\Gsub)$ consists of node pairs $\{i,j\} \in E(G)$ such that $i \in \Vz$ and $j \in \Vz$.
The sampling mechanism used to generate $\Vz$ is assumed to be stochastically independent of $G$. Especially, any nonrandom selection of $\Vz$ fits this framework. On the other hand, several other natural sampling mechanisms \cite{Kolaczyk_2009} are ruled out by this assumption, although we believe that several of the results in this paper can be generalised to a wider context.

In what follows, we shall assume that the size of observed subgraph satisfies $n^\alpha \ll n_0 \le n$ for some $\alpha \in (0,1)$. An important special case with $n_0 = n$ amounts to observing the full graph $G$.

\section{Main results}
\label{sec:Results}

\subsection{Estimation of mean degree}

Consider a random intersection graph $G=G(n,m,p)$ in a sparse parameter regime \eqref{eq:Sparse} with mean degree $\lambda \in (0,\infty)$, and assume that we have observed a subgraph $\Gsub$ of $G$ induced by a set of nodes $\Vz$ of size $n_0$, as described in Section~\ref{sec:Sampling}. Then a natural estimator of $\lambda$ is the normalised average degree
\begin{equation}
 \label{eq:LambdaEstimator}
 \hat \lambda( \Gsub )
 \weq \frac{n}{n_0^2} \sum_{i \in \Vz} \deg_{\Gsub}(i).
\end{equation}
This estimator is asymptotically unbiased because by \eqref{eq:MeanDegree},
\[
 \E \hat \lambda( \Gsub )
 \weq \frac{n}{n_0} (n_0-1) \pr( ij \in E(G) )
 \wsim \lambda.
\]
The following result provides a sufficient condition for the consistency of the estimator of the mean degree $\lambda$, i.e., $\hat \lambda \rightarrow \lambda$ in probability as $n \rightarrow \infty$.

\begin{theorem}
\label{the:EstimatorLambda}
For a random intersection graph $G=G(n,m,p)$ in a sparse parameter regime \eqref{eq:Sparse}, the estimator of $\lambda$ defined by \eqref{eq:LambdaEstimator} is consistent when $n_0 \gg n^{1/2}$. Moreover,
$
  \hat \lambda( \Gsub )
  = \lambda +O_p(n^{1/2}/n_0)
$
for $m \gg n_0^2/n \gg 1$.
\end{theorem}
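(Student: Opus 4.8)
The plan is to apply Chebyshev's inequality to $\hat\lambda(\Gsub)$, so the whole task reduces to bounding its variance. Since $\Vz$ is sampled independently of $G$ and the model is exchangeable in its node labels, I would first condition on $\Vz$ and treat it as an arbitrary fixed set of $n_0$ nodes. Writing the induced edge count as $|E(\Gsub)| \weq \sum_{\{i,j\}\subseteq\Vz} A(i,j)$ and noting that $\sum_{i\in\Vz}\deg_{\Gsub}(i)$ counts each edge twice gives $\hat\lambda(\Gsub) \weq \frac{2n}{n_0^2}\sum_{\{i,j\}\subseteq\Vz}A(i,j)$. The mean has already been computed in the excerpt; it remains only to observe that the deterministic bias $\E\hat\lambda(\Gsub)-\lambda$ is of order $1/n_0$ (from the $\frac{n_0-1}{n_0}$ factor) together with the inherent $o(1)$ of the regime, hence negligible against the stochastic fluctuation bounded below.

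For the variance I would use $\Var\big(\sum A(i,j)\big) \weq \sum\Var(A(i,j)) + \sum_{e\ne f}\Cov(A_e,A_f)$ and exploit the product structure of $B$. Two edges on four distinct nodes depend on disjoint rows of $B$ and are therefore independent, so the only covariance contributions come from pairs of edges sharing a single node. The diagonal terms satisfy $\Var(A(i,j)) \wle \pr(ij\in E(G)) \wsim \lambda/n$, and summing over the $\binom{n_0}{2}$ pairs contributes order $n_0^2\lambda/n$.

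The \emph{key step} is the covariance of two coincident edges $\{i,j\},\{i,k\}$. Conditioning on the attribute set $V_i$ renders $A(i,j)$ and $A(i,k)$ conditionally independent, each with conditional mean $1-(1-p)^{|V_i|}$, so by the law of total covariance $\Cov(A(i,j),A(i,k)) \weq \Var\big((1-p)^{|V_i|}\big)$ with $|V_i|\sim\Bin(m,p)$. Evaluating the binomial generating function gives $\Var\big((1-p)^{|V_i|}\big) \weq (1-2p^2+p^3)^m - (1-p^2)^{2m}$, which a mean-value estimate (both bases lie in $(0,1)$ and differ by $p^3-p^4$) bounds by $mp^3$. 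Summing over the $\lesssim n_0^3$ coincident edge pairs and collecting terms, and using $mp^2\wsim\lambda/n$, I expect to obtain $\Var(\hat\lambda(\Gsub)) \lesssim \frac{\lambda n}{n_0^2} + \frac{\lambda n p}{n_0}$.

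Finally I would read off both claims. Since $m\ge 1$ forces $np = \sqrt{\lambda n/m} \lesssim n^{1/2}$, both variance terms vanish as soon as $n_0\gg n^{1/2}$, and Chebyshev yields consistency. The ratio of the second term to the first is exactly $n_0 p$, and the hypothesis $m\gg n_0^2/n$ is equivalent to $n_0 p\ll 1$; under it the variance is dominated by $\lambda n/n_0^2$, the standard deviation is of order $n^{1/2}/n_0$, and Chebyshev gives $\hat\lambda(\Gsub) = \lambda + O_p(n^{1/2}/n_0)$. I expect the main obstacle to be the coincident-edge covariance: isolating $V_i$ as the sole source of dependence and extracting the sharp $mp^3$ bound, since this is what fixes the rate exponent and makes the threshold $m\gg n_0^2/n$ emerge.
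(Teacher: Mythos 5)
Your proposal is correct and follows the same overall strategy as the paper: write $\hat\lambda(\Gsub)$ as a normalised sum of edge indicators, bound its variance by splitting pairs of edges into identical, coincident (sharing one node), and disjoint pairs, note that disjoint pairs are independent, and finish with Chebyshev. Where you genuinely diverge is in the one nontrivial term, the coincident pairs. The paper identifies the joint occurrence of two coincident edges with the event $G \supset S_2$ and invokes its covering-density machinery (Theorem~\ref{the:MotifDensity} via Lemma~\ref{the:Covering}) to get $\pr(G\supset S_2)\sim mp^3+m^2p^4$, then bounds the covariance contribution by the full 2-star probability, which leaves an extra $O(n_0^{-1})$ term in $\Var(\hat\lambda)$ that must be absorbed into $O(n/n_0^2)$. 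You instead compute the covariance exactly: conditional independence of $A(i,j)$ and $A(i,k)$ given $V_i$ plus the binomial generating function give $\Cov\bigl(A(i,j),A(i,k)\bigr)=(1-2p^2+p^3)^m-(1-p^2)^{2m}\le mp^3$, which cancels the $m^2p^4$ part automatically and makes the proof self-contained --- no covering-family lemma is needed for this theorem (though the paper needs that machinery later anyway). Both routes land on $\Var(\hat\lambda)=O\bigl(n/n_0^2+np/n_0\bigr)$ (the paper writes the second term as $\lambda^{3/2}m^{-1/2}n^{1/2}/n_0$, which equals your $\lambda np/n_0$), and your reading-off of the two claims, including the equivalence of $m\gg n_0^2/n$ with $n_0p\ll 1$, matches the paper's conclusion. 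One shared caveat, not specific to your argument: neither you nor the paper pins down the rate of the deterministic bias $n\pr(ij\in E(G))-\lambda$, which the assumption $mp^2\sim\lambda n^{-1}$ alone only makes $o(1)$; this is glossed over identically in both treatments of the $O_p(n^{1/2}/n_0)$ claim.
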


\subsection{Transitivity coefficient}

For a random or nonrandom graph $G$ with maximum degree at least two, the transitivity coefficient (a.k.a.\ global clustering coefficient \cite{VanDerHofstad_2017,Newman_2003_Structure}) is defined by
\begin{equation}
 \label{eq:tre}
 \tre(G) \weq 3 \frac{N_{K_3}(G)}{N_{S_2}(G)}
\end{equation}
and the \emph{model transitivity coefficient} by
\[
 \trm(G)
 \weq 3 \frac{\E N_{K_3}(G)}{\E N_{S_2}(G)},
\]
where $N_{K_3}(G)$ is the number of triangles\footnote{subgraphs isomorphic to the graph $K_3$ with $V(K_3) = \{1,2,3\}$ and $E(K_3) = \{12,13,23\}$.} and $N_{S_2}(G)$ is the number of 2-stars\footnote{subgraphs isomorphic to the graph $S_2$ with $V(S_2) = \{1,2,3\}$ and $E(S_2) = \{12,13\}$.} in $G$. The above definitions are motivated by noting that
\begin{alignat*}{2}
 \tre(G) &\weq \pr_G( \,  &I_2 I_3 \in E(G) \, \cond \, I_1 I_2 \in E(G), \, I_1 I_3 \in E(G) \, ), \\
 \trm(G) &\weq \, \pr( &I_2 I_3 \in E(G) \, \cond \, I_1 I_2 \in E(G), \, I_1 I_3 \in E(G) \, ),
\end{alignat*}
for an ordered 3-tuple of distinct nodes $(I_1,I_2,I_3)$ selected uniformly at random and independently of $G$, where $\pr_G$ refers to conditional probability given an observed realisation of $G$. The model transitivity coefficient $\trm(G)$ is a nonrandom quantity which depends on the random graph model $G$ only via its probability distribution, and is often easier to analyse than its empirical counterpart. Although $\trm(G) \ne \E \tre(G)$ in general, it is widely believed that $\trm(G)$ is a good approximation of $\tre(G)$ in large and sparse graphs \cite{Bloznelis_2013,Deijfen_Kets_2009}. The following result confirms this in the context of binomial random intersection graphs.

\begin{theorem}
\label{the:Transitivity}
Consider a random intersection graph $G=G(n,m,p)$ in a balanced sparse parameter regime \eqref{eq:Balanced}. If $n_0 \gg n^{2/3}$, then 
\begin{equation}
 \label{eq:Transitivity}
 \tre(\Gsub) \weq \frac{1}{1+\mu} + o_p(1).
\end{equation}
\end{theorem}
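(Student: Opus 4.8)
The plan is to show that the empirical transitivity coefficient $\tre(\Gsub) = 3 N_{K_3}(\Gsub)/N_{S_2}(\Gsub)$ concentrates around its model counterpart $\trm(\Gsub)$, and that the latter converges to $1/(1+\mu)$. Since $\tre$ is a ratio, I would first reduce the problem to establishing that the triangle and 2-star counts are each sharply concentrated around their means: if I can show
\begin{equation*}
 N_{K_3}(\Gsub) = \E N_{K_3}(\Gsub)\,(1+o_p(1)), \qquad N_{S_2}(\Gsub) = \E N_{S_2}(\Gsub)\,(1+o_p(1)),
\end{equation*}
then by the continuous mapping / Slutsky argument the ratio satisfies $\tre(\Gsub) = \trm(\Gsub)\,(1+o_p(1))$, provided $\E N_{S_2}(\Gsub)$ is bounded away from zero (which it is, growing polynomially in $n_0$).

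The second ingredient is a mean-value computation: I would compute $\E N_{K_3}(\Gsub)$ and $\E N_{S_2}(\Gsub)$ explicitly in the balanced regime \eqref{eq:Balanced}. A 2-star on nodes $\{I_1,I_2,I_3\}$ requires two independent attribute-sharing events, so its probability is $\sim (mp^2)^2 = (\lambda/n)^2$, giving $\E N_{S_2}(\Gsub) \sim \binom{n_0}{3}\cdot 3 (\lambda/n)^2$ up to the ordered/unordered counting factor. For a triangle I would decompose the event by the mechanism creating the third edge $I_2 I_3$: either a fresh shared attribute, contributing at leading order the same $(\lambda/n)^3$-type term as three independent edges, or the \emph{same} attribute being shared by all three nodes (a common attribute $k$ with $B(I_1,k)=B(I_2,k)=B(I_3,k)=1$), which occurs with probability $\sim m p^3 = \mu (\lambda/n)^2$ and dominates. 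Carefully summing these contributions should yield $\E N_{K_3}(\Gsub)/\E N_{S_2}(\Gsub) \to \tfrac{1}{3}\cdot\tfrac{1}{1+\mu}$, so that $\trm(\Gsub) \to 1/(1+\mu)$; the factor $\mu$ in the denominator is precisely the relative weight of the ``independent edges'' term against the ``shared common attribute'' term.

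For the concentration step I would use the second-moment method, bounding $\Var N_{S_2}(\Gsub)$ and $\Var N_{K_3}(\Gsub)$ and showing each is $o$ of the square of the corresponding mean, then applying Chebyshev. This requires expanding the variance as a sum over pairs of motifs indexed by their overlap (sharing $0$, $1$, or $2$ nodes, and in the attribute structure sharing or not sharing attributes). The diagonal and disjoint terms are routine; the delicate terms are motif pairs sharing one or two nodes, whose joint probabilities involve correlated attribute-sharing events. The threshold $n_0 \gg n^{2/3}$ should emerge exactly from forcing the dominant overlap term in $\Var N_{K_3}(\Gsub)$ to be negligible relative to $(\E N_{K_3}(\Gsub))^2 \asymp (n_0^3 n^{-2})^2$.

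The main obstacle I anticipate is the triangle variance. Because triangles can be formed either by three pairwise-shared attributes or by a single common attribute, and because distinct triangles may share attributes without sharing nodes (two triangles both using the large clique $W_k$ spawned by a popular attribute $k$), the correlation structure is richer than in Erd\H{o}s--R\'enyi graphs: attribute-induced cliques create long-range dependence among motifs. I would handle this by conditioning on the attribute assignment $B$, or equivalently by enumerating the contribution of each attribute $k$ to the motif count and controlling the fluctuation of $|W_k|$ (whose size is $\Bin(n_0, p)$, hence $O_p(\mu n_0/n)$), being careful that rare large cliques do not dominate the variance. Getting the bookkeeping of overlapping attribute contributions right, rather than the final scaling, is where the real work lies.
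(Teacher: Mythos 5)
Your overall strategy --- concentrating $N_{K_3}(\Gsub)$ and $N_{S_2}(\Gsub)$ around their means by a second-moment/Chebyshev argument and then passing to the ratio --- is exactly the route the paper takes, and your identification of where the threshold $n_0\gg n^{2/3}$ comes from (the fully-overlapping term $\E N_{K_3}\asymp n_0^3 m^{-2}$ measured against $(\E N_{K_3})^2$) is correct. The paper organises the required joint probabilities through Theorem~\ref{the:MotifDensity}, $\pr(G\supset R)\sim\sum_{\cC\in\mcf(R)}m^{|\cC|}p^{||\cC||}$ over \emph{minimal covering families} of $E(R)$, applied to every union of two overlapping triangles (diamond, butterfly, \dots) and of two overlapping 2-stars, catalogued in Table~\ref{tab:Summary}. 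Your alternative of conditioning on the attribute matrix $B$ and controlling the per-attribute clique sizes $|W_k|$ is a viable substitute, but you would still need the same case analysis over overlap patterns, so it buys little here.

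The genuine problem is your mean computation for the 2-stars. You write $\pr(\text{2-star})\sim(mp^2)^2$, treating the two edges $I_1I_2$ and $I_1I_3$ as produced by independent attribute-sharing events. But the very ``single common attribute'' mechanism you invoke for the triangle also produces a 2-star: an attribute held by all of $I_1,I_2,I_3$ covers both edges at once and contributes $mp^3$ to the 2-star probability. In the balanced regime $mp^3=\mu^{-1}(\lambda/n)^2$ and $m^2p^4=(\lambda/n)^2$ are of the \emph{same order}, so $\pr(G\supset S_2)\sim mp^3+m^2p^4=mp^3(1+\mu)$, while $\pr(G\supset K_3)\sim mp^3$ (the term $m^3p^6$ from three independent attributes is negligible, contrary to your suggestion that it contributes at leading order). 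The limit $1/(1+\mu)$ is exactly the ratio of these two quantities: the ``$1$'' is the common-attribute contribution to the 2-star count and the ``$\mu$'' is the two-independent-attributes contribution. With your displayed value $\E N_{S_2}\sim 3\binom{n_0}{3}(\lambda/n)^2$ the ratio comes out as $1/\mu$ --- or as $\mu$ if one also propagates your separate slip $mp^3=\mu(\lambda/n)^2$ (it is $\mu^{-1}(\lambda/n)^2$, since $p\sim(\lambda/\mu)n^{-1}$) --- in either case not $1/(1+\mu)$. Your closing remark about the relative weight of the two mechanisms shows you have the right picture in mind, but the computation as written contradicts it and must be corrected before the concentration step can deliver the stated limit.
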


It has been observed (with a slightly different parameterisation) in \cite{Deijfen_Kets_2009} that the model transitivity coefficient of the random intersection graph $G=G(n,m,p)$ satisfies
\[
 \trm(G) \weq
 \begin{cases}
 1 + o(1), &\quad p \ll m^{-1}, \\
 \frac{1}{1+\mu} + o(1), &\quad p \sim \mu m^{-1}, \\
 o(1), &\quad m^{-1} \ll p \ll m^{-1/2},
 \end{cases}
\]
and only depends on $n$ via the scale parameter. Hence, as a consequence of Theorem~\ref{the:Transitivity}, it follows that
\[
  \tre(G) \weq  \trm(G) + o_p(1)
\]
for large random intersection graphs $G=G(n,m,p)$ in the balanced sparse parameter regime \eqref{eq:Balanced}.

\subsection{Estimation of attribute intensity}

Consider a random intersection graph $G=G(n,m,p)$ in a balanced sparse parameter regime \eqref{eq:Balanced} with mean degree $\lambda \in (0,\infty)$ and attribute intensity $\mu \in (0,\infty)$, and assume that we have observed a subgraph $\Gsub$ of $G$ induced by a set of nodes $\Vz$ of size $n_0$, as described in Section~\ref{sec:Sampling}. We will now introduce two estimators for the attribute intensity $\mu$.

The first estimator of $\mu$ is motivated by the connection between the empirical and model transitivity coefficients established in Theorem~\ref{the:Transitivity}. By ignoring the error term in \eqref{eq:Transitivity}, plugging the observed subgraph $\Gsub$ into the definition of the transitivity coefficient \eqref{eq:tre}, and solving for $\mu$, we obtain an estimator
\begin{equation}
 \label{eq:MuEstimator1}
 \hat \mu_1( G^{(n_0)} )
 \weq \frac{N_{S_2}(\Gsub)}{3 N_{K_3}(\Gsub)} \, - \, 1.
\end{equation}

An alternative estimator of $\mu$ is given by
\begin{equation}
 \label{eq:MuEstimator2}
 \hat \mu_2( G^{(n_0)} )
 \weq \left( \frac{n_0 N_{S_2}(\Gsub)}{2 N_{K_2}(\Gsub)^2} \, - \, 1 \right)^{-1},
\end{equation}
where $N_{K_2}(\Gsub) = |E(\Gsub)|$. A heuristic derivation of the above formula is as follows. For a random intersection graph $G$ in the balanced sparse parameter regime \eqref{eq:Balanced}, the expected number of 2-stars in $\Gsub$ is asymptotically (see Section \ref{sec:Proofs})
\[
  \E N_{S_2}(\Gsub)
  \wsim 3 \binom{n_0}{3} (mp^3 + m^2 p^4)
  \wsim \frac{1}{2} n_0^3 \mu^3(1+\mu) m^{-2}
\]
and the expectation of $N_{K_2}(\Gsub) = |E(\Gsub)|$ is asymptotically
\[
  \E N_{K_2}(\Gsub)
  \wsim \binom{n_0}{2} mp^2
  \wsim \frac{1}{2} n_0^2 \mu^2 m^{-1}.
\]
Hence
\[
 \frac{\E N_{S_2}(\Gsub)}{(\E N_{K_2}(\Gsub))^2}
 \wsim \frac{2}{n_0} (1 + \mu^{-1}),
\]
so by omitting the expectations above and solving for $\mu$ we obtain \eqref{eq:MuEstimator2}.

The following result confirms that both of the above heuristic derivations yield consistent estimators for the attribute intensity when the observed subgraph is large enough.
\begin{theorem}
\label{the:EstimatorMu}
For a random intersection graph $G=G(n,m,p)$ in a balanced sparse parameter regime \eqref{eq:Balanced}, the estimators of $\mu$ defined by \eqref{eq:MuEstimator1} and \eqref{eq:MuEstimator2} are consistent when $n_0 \gg n^{2/3}$.
\end{theorem}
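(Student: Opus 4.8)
The plan is to regard each estimator as a continuous function of the motif counts $N_{K_2}(\Gsub)$, $N_{S_2}(\Gsub)$, $N_{K_3}(\Gsub)$, and to show that the relevant counts concentrate around their expectations with vanishing relative error once $n_0 \gg n^{2/3}$; both consistency claims then follow from the continuous mapping theorem. I would split the argument so that $\hat\mu_1$ rests directly on Theorem~\ref{the:Transitivity}, while $\hat\mu_2$ needs one genuinely new concentration estimate, for the 2-star count.

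The estimator $\hat\mu_1$ requires almost no new work. Writing $\hat\mu_1(\Gsub) = 1/\tre(\Gsub) - 1$ and invoking $\tre(\Gsub) = \tfrac{1}{1+\mu} + o_p(1)$ from Theorem~\ref{the:Transitivity}, the limit $\tfrac{1}{1+\mu}$ is strictly positive, so the map $t \mapsto 1/t - 1$ is continuous there and $\hat\mu_1(\Gsub) = (1+\mu) - 1 + o_p(1) = \mu + o_p(1)$ whenever $n_0 \gg n^{2/3}$.

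For $\hat\mu_2$ I would first record the means from the heuristic derivation, $\E N_{S_2}(\Gsub) \sim \tfrac12 n_0^3 \mu^3(1+\mu) m^{-2}$ and $\E N_{K_2}(\Gsub) \sim \tfrac12 n_0^2 \mu^2 m^{-1}$, whose ratio gives $\tfrac{n_0 N_{S_2}(\Gsub)}{2 N_{K_2}(\Gsub)^2} \to 1 + \mu^{-1}$ at the level of expectations. The edge count concentrates already for $n_0 \gg n^{1/2}$: since $\hat\lambda(\Gsub) = \tfrac{2n}{n_0^2} N_{K_2}(\Gsub)$, the consistency part of Theorem~\ref{the:EstimatorLambda} yields $N_{K_2}(\Gsub) = \E N_{K_2}(\Gsub)(1 + o_p(1))$, a fortiori when $n_0 \gg n^{2/3}$. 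It then remains to prove the relative concentration $N_{S_2}(\Gsub) = \E N_{S_2}(\Gsub)(1 + o_p(1))$ by the second moment method, i.e. $\Var N_{S_2}(\Gsub) = o\big( (\E N_{S_2}(\Gsub))^2 \big)$. Granting this, Chebyshev's inequality together with the continuous mapping theorem — now using that $s \mapsto (s-1)^{-1}$ is continuous at $s = 1 + \mu^{-1}$ because $\mu^{-1} > 0$ — gives $\hat\mu_2(\Gsub) \to \mu$ in probability.

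The hard part will be this variance bound for $N_{S_2}$. I would expand $\E[N_{S_2}(\Gsub)^2]$ as a sum over pairs of 2-stars grouped by the structure of their union, evaluating the probability that a prescribed subgraph is present through the attribute-cover decomposition: a cover $\cC$ of the edge set by cliques, each clique corresponding to one shared attribute, contributes on the order $m^{|\cC|} p^{\|\cC\|} \sim \mu^{\|\cC\|} m^{|\cC| - \|\cC\|}$, so only the covers minimising $\|\cC\| - |\cC|$ survive to leading order (for a single 2-star these are the triangle cover and the two-edge cover, accounting for the two terms $mp^3 + m^2 p^4$). The diagonal and disjoint-pair terms reproduce $(\E N_{S_2}(\Gsub))^2 \asymp n_0^6/n^4$, and the task is to verify that every overlapping union — the 4-paths, 4-cycles, chairs, butterflies and their degeneracies obtained by merging two 2-stars — contributes strictly less. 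This is exactly where the threshold $n_0 \gg n^{2/3}$ bites: among all the counts, $\E N_{S_2}(\Gsub) \asymp n_0^3/n^2$ is slowest to diverge and hence the binding constraint, and the overlap configurations, which carry fewer powers of $n_0$, become negligible against $(\E N_{S_2}(\Gsub))^2$ precisely in this regime. The calculation itself is routine but bookkeeping-heavy, which is what the enumeration of union graphs and their minimal attribute covers is there to organise.
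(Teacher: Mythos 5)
Your proposal is correct and follows essentially the same route as the paper: relative concentration of $N_{K_2}(\Gsub)$ and $N_{S_2}(\Gsub)$ (and, for $\hat\mu_1$, $N_{K_3}(\Gsub)$) established by the second-moment method, with the overlapping-pair unions evaluated through minimal attribute covers exactly as in Theorem~\ref{the:MotifDensity} and Table~\ref{tab:Summary}, followed by Chebyshev's inequality and the continuous mapping theorem. The only organizational difference is that you derive $\hat\mu_1$ as a corollary of Theorem~\ref{the:Transitivity} while the paper proves both theorems in one pass from the same variance bounds; your identification of the diagonal term $1/\E N_{S_2}(\Gsub)$ as the source of the $n_0 \gg n^{2/3}$ threshold agrees with the paper's accounting.
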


\subsection{Computational complexity of the estimators}

The evaluation of the estimator $\hat \lambda$ given by \eqref{eq:LambdaEstimator} requires computing the degrees of the nodes in the observed subgraph $\Gsub$. This can be done in $O(n_0 \degmax)$ time, where $\degmax$ denotes the maximum degree of $\Gsub$.

Evaluating the estimator $\hat\mu_1$ given by \eqref{eq:MuEstimator1} requires counting the number of triangles in $\Gsub$ which is a nontrivial task for very large graphs. A naive algorithm requires an overwhelming $O(n_0^3)$ time for this, a listing method can accomplish this in $O(n_0 d_{\rm max}^2)$ time, and there also exist various more advanced algorithms \cite{Tsourakakis_2008}.

The estimator $\hat \mu_2$ given by \eqref{eq:MuEstimator2} can be computed without the need to compute the number of triangles. Actually, the computation of $\hat \mu_2$ only requires to evaluate the degrees of the nodes in $\Gsub$. Namely, with help of the formulas
\[
 N_{K_2}(\Gsub) \weq \frac{1}{2} \sum_{i \in \Vz} \deg_{\Gsub}(i)
 \quad \text{and} \quad
 N_{S_2}(\Gsub) \weq \sum_{i \in \Vz} \binom{\deg_{\Gsub}(i)}{2},
\]
one can verify that
\[
 \hat \mu_2( G^{(n_0)} )
 \weq \left( \frac{a_2 - a_1}{a_1^2} \, - \, 1 \right)^{-1},
\]
where $a_k = n_0^{-1} \sum_{i \in \Vz} \deg_{\Gsub}(i)^k$ denotes the $k$-th moment of the empirical degree distribution of $\Gsub$.

We conclude that the parameters $(\lambda,\mu)$ of the random intersection graph $G=G(n,m,p)$ in the balanced sparse parameter regime \eqref{eq:Balanced} can be consistently estimated in $O(n_0 \degmax)$ time using the estimators $\hat \lambda$ and $\hat \mu_2$.

\section{Numerical experiments}
\label{sec:Experiments}

In this section we study the non-asymptotic behaviour of the parameter estimators $\hat \lambda$ \eqref{eq:LambdaEstimator}, $\hat \mu_1$ \eqref{eq:MuEstimator1}, and $\hat \mu_2$ \eqref{eq:MuEstimator2} using simulated data. In the first experiment, a random intersection graph was generated for each $n=50, 70, \ldots , 1000$, using parameter values $(\lambda = 9,$ $ \mu=3)$ and $(\lambda = 2, $ $\mu = 0.5)$. All of the data was used for estimation, i.e., $n_0 = n$.

\begin{figure}[h]
      \begin{subfigure}{0.5\textwidth}
        \includegraphics[trim = 20mm 67mm 15mm 70mm, clip,width=\textwidth]{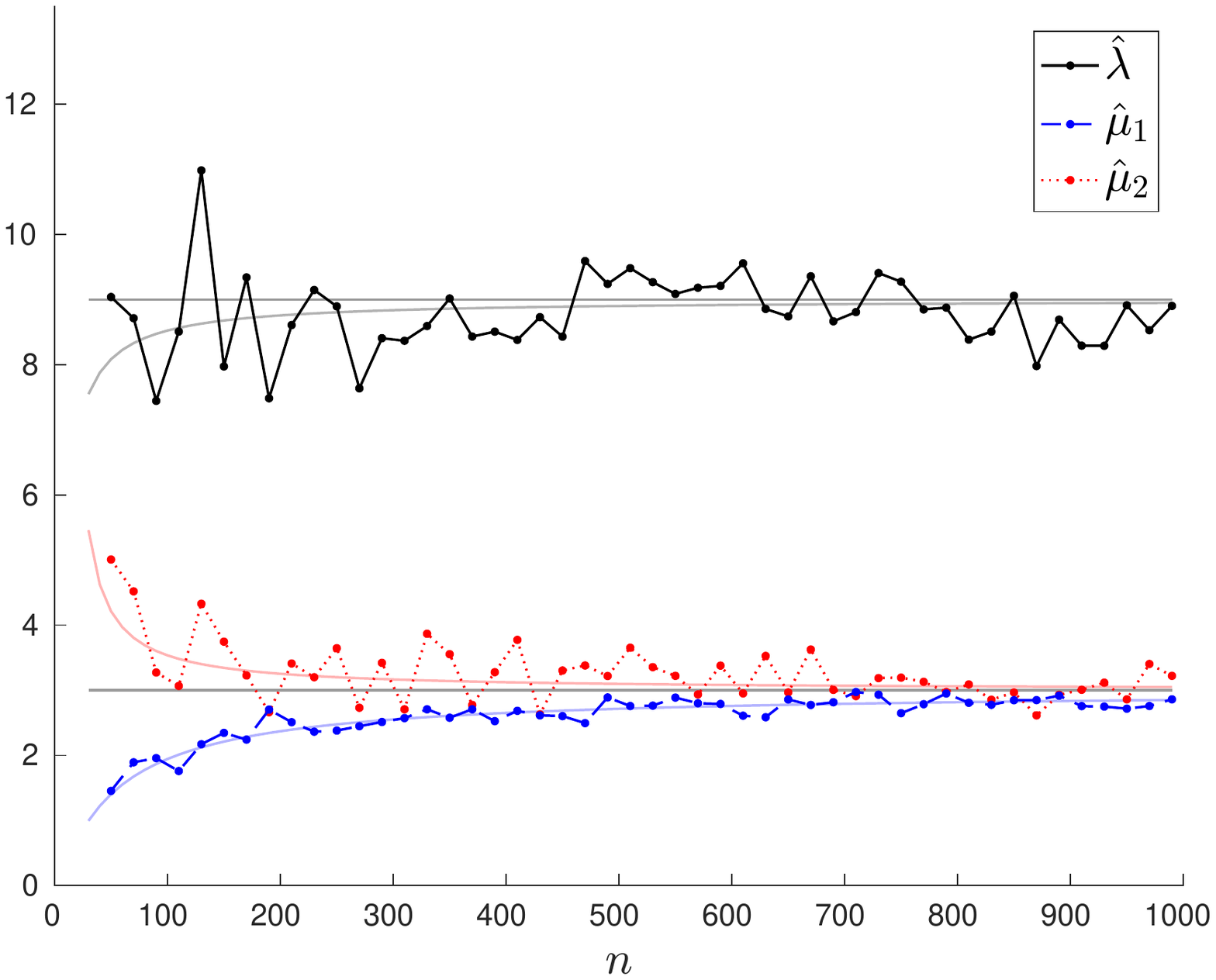}
         \caption{ $\lambda = 9$, $\mu = 3$}
      \end{subfigure}
      \begin{subfigure}{0.5\textwidth}
        \includegraphics[trim = 20mm 68.5mm 15mm 68.5mm, clip,width=\textwidth]{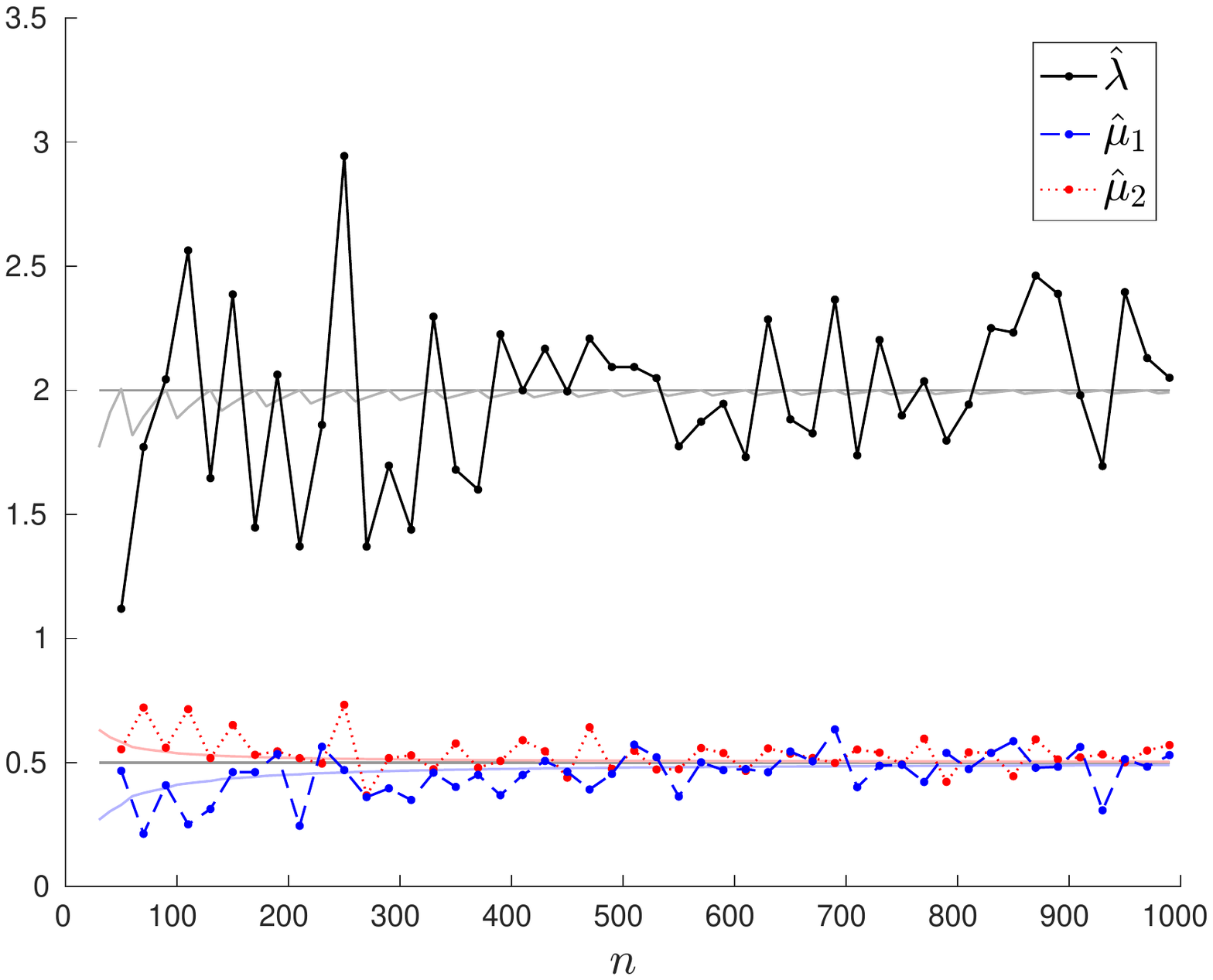}
         \caption{ $\lambda = 2$, $\mu = 0.5$} 
      \end{subfigure}

\caption{  Simulated values of the estimators $\hat \lambda$, $\hat \mu_1$, and $\hat \mu_2$ with $n_0 = n$. The solid curves show the theoretical values of the estimators when the feature counts $N_*(G^{(n)})$ are replaced by their expected values.}
\label{fig:plot1000}
\end{figure}

\begin{figure}[h]
\begin{subfigure}{0.5\textwidth}
        \includegraphics[trim = 20mm 67mm 10mm 70mm, width=\textwidth]{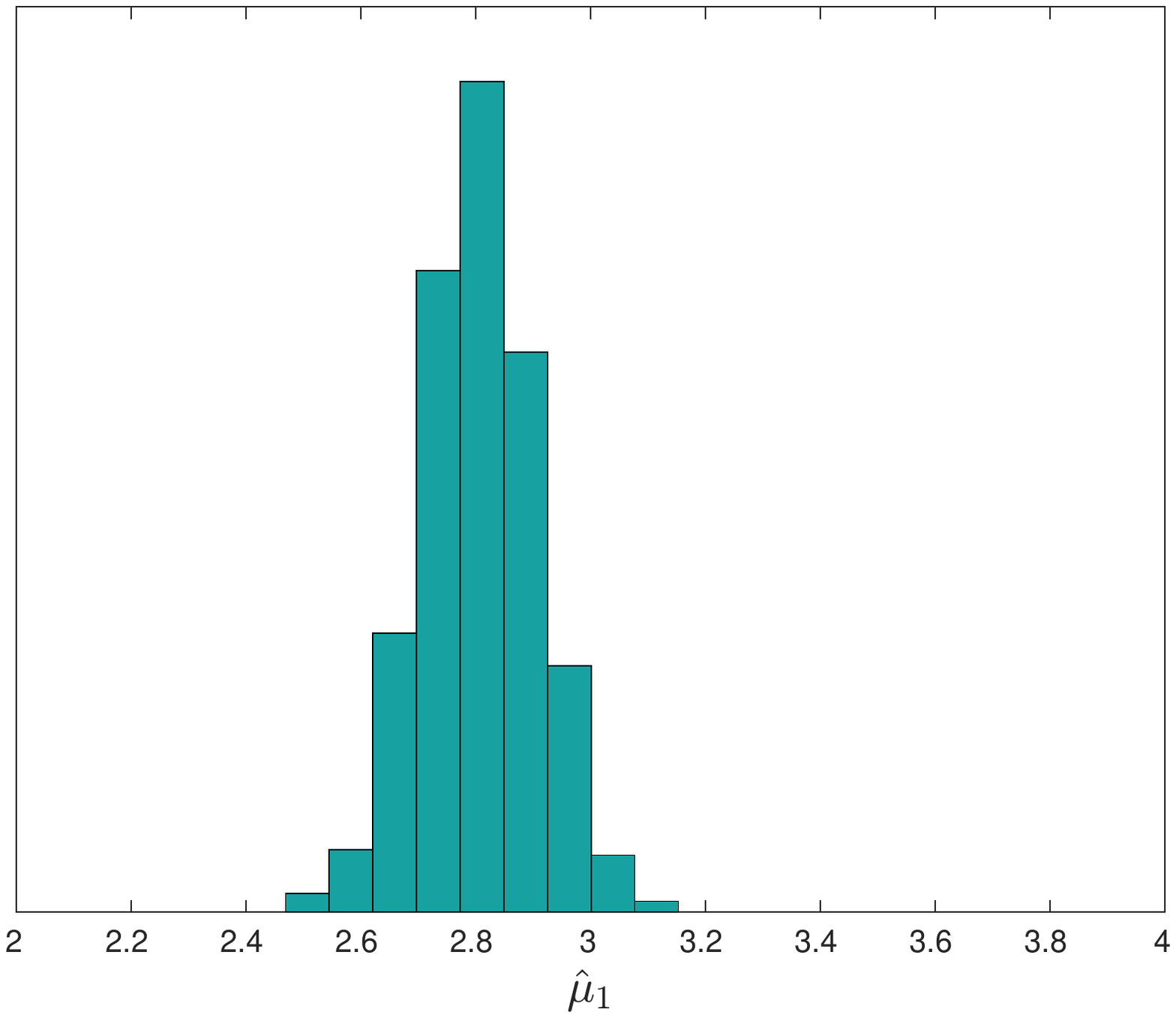} 
      \end{subfigure}
      \begin{subfigure}{0.5\textwidth}
        \includegraphics[trim = 20mm 65mm 2mm 68mm, width=\textwidth]{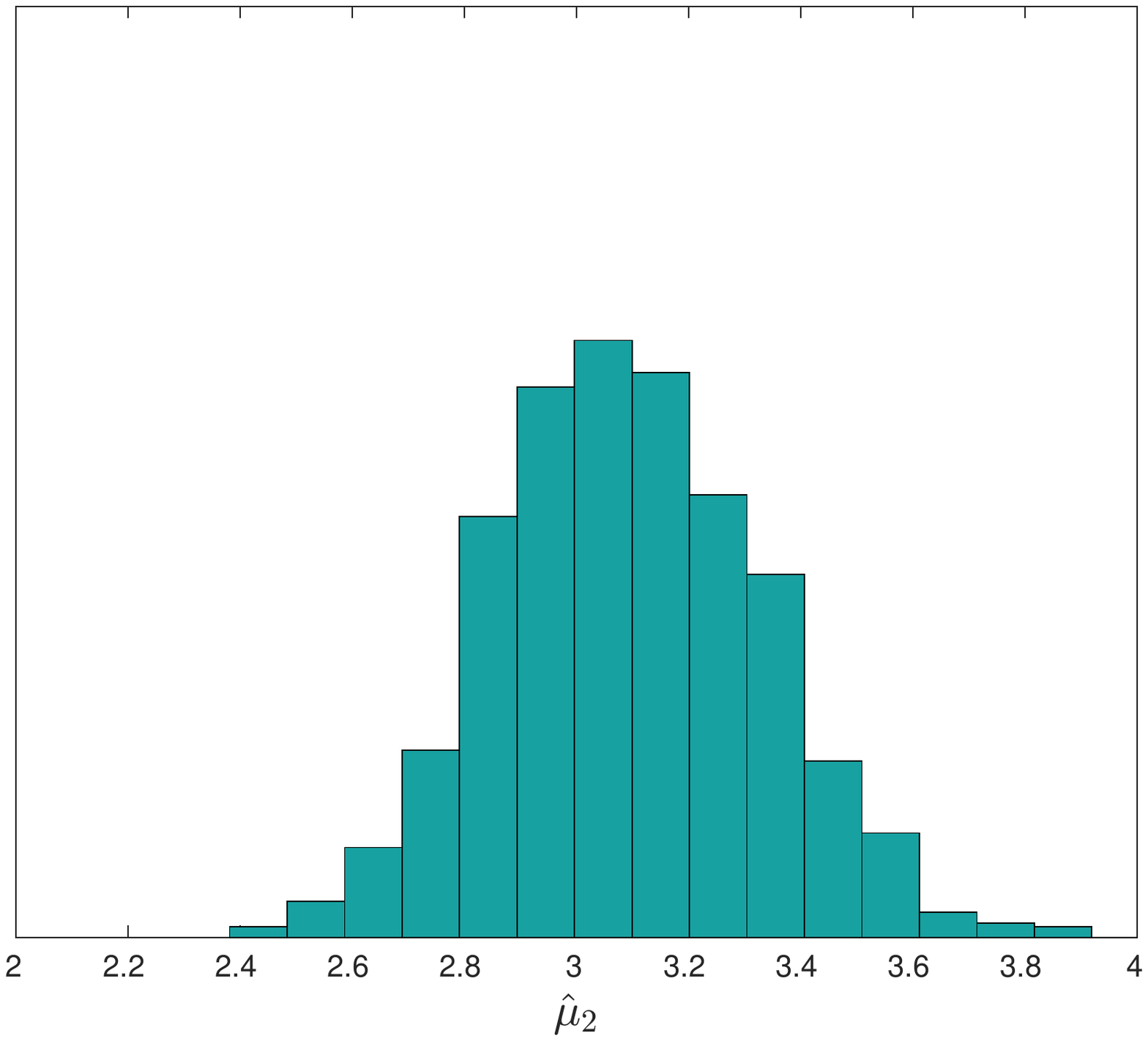}
      \end{subfigure}
      \caption{1000 simulated values of $\hat \mu_1$ and $\hat \mu_2$ with $\lambda = 9,$ $ \mu = 3$, and $n_0 = n = 750$.}
\label{fig:hist750}
\end{figure}

Figure \ref{fig:plot1000} shows the computed estimates $\hat \lambda$, $\hat \mu_1$, and $\hat \mu_2$ for each $n$. For comparison, the theoretical values of these estimators are also shown when the counts of links, 2-stars, and triangles are replaced by their expected values in \eqref{eq:LambdaEstimator}, \eqref{eq:MuEstimator1}, and \eqref{eq:MuEstimator2}. 

With $(\lambda=9,$ $ \mu = 3)$, the parameter $\mu$ is generally underestimated by $\hat \mu_1$ and overestimated by $\hat \mu_2$. The errors in $\hat \mu_1$ appear to be dominated by the bias, whereas the errors in $\hat \mu_2$ are mostly due to variance.
With $(\lambda=2,$ $ \mu = 0.5)$, the simulated graphs are more sparse. The differences between the two estimators of $\mu$ are small, and the relative error of $\hat \lambda$ appears to have increased. The discontinuities of the theoretical values of $\hat \lambda$ are due to the rounding of the numbers of attributes $m$.

In the second experiment, 1000 random intersection graphs were simulated with $n_0 = n = 750$ and $(\lambda = 9,$ $ \mu = 3)$. Histograms of the estimates of $\mu$ are shown in Figure \ref{fig:hist750}. The bias is visible in both $\hat \mu_1$ and $\hat \mu_2$, and the variance of $\hat \mu_2$ is larger than that of $\hat \mu_1$. However, the difference in accuracy is counterbalanced by the fact that $\hat \mu_1$ requires counting the triangles.

%%% PROOFS %%%

\section{Proofs}
\label{sec:Proofs}

\subsection{Covering densities of subgraphs}
\label{sec:covdens}
Denote by $\Pow(\Omega)$ the collection of all subsets of $\Omega$. For $\cA, \cB \subset \Pow(\Omega)$ we denote $\cA \Subset \cB$ and say that $\cB$ is a \emph{covering family} of $\cA$, if for every $A \in \cA$ there exists $B \in \cB$ such that $A \subset B$. A covering family $\cB$ of $\cA$ is called \emph{minimal} if for any $B \in \cB$,
\begin{enumerate}[(i)]
\item the family obtained by removing $B$ from $\cB$ is not a covering family of $\cA$, and
\item the family obtained by replacing $B$ by a strict subset of $B$ is not a covering family of $\cA$.
\end{enumerate}
For a graph $R = (V(R),E(R))$, we denote by $\mcf(R)$ the set of minimal covering families of $E(R)$. Note that all members of a minimal covering family have size at least two. For a family of subsets $\cC = \{C_1,\dots,C_t\}$ consisting of $t$ distinct sets, we denote $|\cC| = t$ and $||\cC|| = \sum_{s=1}^t |C_s|$. The notation $R \subset G$ means that $R$ is a subgraph of $G$.

The following result is similar in spirit to \cite[Theorem 3]{Karonski_Scheinerman_Singer-Cohen_1999}, but focused on subgraph frequencies instead of appearance thresholds.

\begin{theorem}
\label{the:MotifDensity}
If $mp^2 \ll 1$, then for any finite graph $R$ not depending on the scale parameter,
\[
 \pr(G \supset R)
 \wsim \sum_{\cC \in \mcf(R)} m^{|\cC|} p^{||\cC||}
\]
\end{theorem}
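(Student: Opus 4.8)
The plan is to reduce the subgraph-containment event to a covering problem for the random attribute traces, and then to establish the asymptotic equivalence by a first- and second-moment computation.

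\textbf{Reformulation.} For each attribute $k$ let $T_k \weq \{i \in V(R) : B(i,k)=1\}$ be the set of nodes of $R$ carrying attribute $k$. Since every attribute $k$ with $\{i,j\} \subseteq T_k$ forces $\{i,j\} \in E(G)$, an edge $\{i,j\} \in E(R)$ is present in $G$ if and only if $\{i,j\} \subseteq T_k$ for some $k$. Hence $R \subset G$ holds if and only if the family $\cT \weq \{T_1,\dots,T_m\}$ is a covering family of $E(R)$. I would first check that, because the reduction steps (i)--(ii) defining minimality only \emph{remove} sets or \emph{shrink} them, any covering family of $E(R)$ reduces to some $\cC \in \mcf(R)$ each of whose sets is contained in a \emph{distinct} member of $\cT$; the distinctness follows since no reduction step ever splits one set into two, so each final set descends from a single trace and hence from a single attribute.

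\textbf{First moment.} For $\cC = \{C_1,\dots,C_t\} \in \mcf(R)$ and an injection $\iota$ from $\cC$ into $\{1,\dots,m\}$, set $Z_{\cC,\iota} \weq \prod_{s=1}^{t} \mathbf{1}(T_{\iota(C_s)} \supseteq C_s)$ and $X \weq \sum_{\cC \in \mcf(R)} \sum_\iota Z_{\cC,\iota}$. The reformulation gives $\{R \subset G\} = \{X \ge 1\}$: if $Z_{\cC,\iota}=1$ then every edge of $R$ lies in some $C_s \subseteq T_{\iota(C_s)}$, while the reduction handles the converse. Because an injection uses distinct attributes, the indicators in the product are independent with $\E Z_{\cC,\iota} = \prod_s p^{|C_s|} = p^{||\cC||}$, and the number of injections is $m(m-1)\cdots(m-|\cC|+1)$, so
\[
 \E X \weq \sum_{\cC \in \mcf(R)} m(m-1)\cdots(m-|\cC|+1)\, p^{||\cC||} \wsim \sum_{\cC \in \mcf(R)} m^{|\cC|} p^{||\cC||},
\]
using that $R$, and hence each $|\cC|$, is bounded. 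Markov's inequality then yields the upper bound $\pr(R \subset G) = \pr(X \ge 1) \wle \E X$.

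\textbf{Second moment and conclusion.} For the matching lower bound I would invoke the Bonferroni inequality $\pr(X \ge 1) \wge \E X - \tfrac12 \E[X(X-1)]$, reducing the problem to showing $\E[X(X-1)] = o(\E X)$. This estimate is the hard part. Here $\E[X(X-1)]$ is a sum over ordered pairs of distinct realizations $(\cC,\iota)\neq(\cC',\iota')$ of $\E[Z_{\cC,\iota}Z_{\cC',\iota'}]$, which factorizes over attributes, each attribute $k$ contributing $p^{|W_k|}$ where $W_k$ is the union of the sets both realizations require $k$ to contain. I would organize the sum by the overlap pattern of the two realizations (which attributes they share and how the required sets merge) and bound the number of pairs with a given pattern by the appropriate power of $m$. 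The point to verify is that every pattern yields a term of order $m^{|\cC|}p^{||\cC||}\cdot (mp^2)^{c}$ with $c \ge 1$: adjoining a second realization always forces at least one extra set of size $\ge 2$ to be covered by some fresh or reused attribute, and each such factor contributes a net weight bounded by $mp^2 \to 0$. This bookkeeping over the finitely many overlap patterns is where the hypothesis $mp^2 \ll 1$ and the antichain/size-$\ge 2$ structure of minimal covering families do the real work; it is routine but must be carried out uniformly over all pairs $\cC,\cC'$. Combining the two moment bounds gives $\pr(R \subset G) \wsim \E X \wsim \sum_{\cC \in \mcf(R)} m^{|\cC|}p^{||\cC||}$, as claimed.
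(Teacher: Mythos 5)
Your overall strategy is the same as the paper's in skeleton -- reduce $\{R\subset G\}$ to the event that the attribute traces cover $E(R)$, pass to minimal covering families, and control the error by a Bonferroni/second-moment bound -- but your implementation is genuinely different in one respect. The paper works with the events $\{\cV \Supset \cC\}$ indexed by the finitely many $\cC\in\mcf(R)$, evaluates each $\pr(\cV\Supset\cC)$ in a separate lemma via the binomial cover-counts $N_s=\sum_k 1(V_k\supset C_s)$ and their asymptotic independence, and then only has to bound pairwise intersections over pairs of covering families. You instead expand the union all the way down to individual witnesses $(\cC,\iota)$ with $\iota$ an injection into the attribute set. This buys you \emph{exact} independence in the first moment (so $\E X$ is computed cleanly, with no appeal to asymptotic independence), at the price of pushing all the combinatorics into the off-diagonal second-moment sum. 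Your reformulation and the claim $\{R\subset G\}=\{X\ge 1\}$ are correct: the reduction never splits a set, so each surviving member of the reduced family descends from a distinct attribute, and if two survivors coincided as sets one would be removable.

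The one place where your argument as stated does not go through is the off-diagonal bound. You claim every overlap pattern contributes $O\bigl(m^{|\cC|}p^{||\cC||}(mp^2)^{c}\bigr)$ with $c\ge 1$ because "adjoining a second realization always forces at least one extra set of size $\ge 2$ to be covered by some fresh or reused attribute." This is false for the patterns in which $\iota$ and $\iota'$ have the \emph{same image}: no fresh attribute is used, and a reused attribute $k$ need not acquire an extra set of size $\ge 2$ -- it only acquires the union $W_k=C_{\iota^{-1}(k)}\cup C'_{\iota'^{-1}(k)}$, which may exceed $C_{\iota^{-1}(k)}$ by as little as one element. The contribution of such a pattern is $m^{|\cC|}p^{\,\sum_k|W_k|}$, and the gain over the diagonal term is only a power of $p$, not of $mp^2$ (these are not comparable when $mp\ll 1$). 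The bound still closes, but via a different mechanism that you must supply: one shows that if $\sum_k|W_k|=||\cC||$, i.e.\ $C'_{\iota'^{-1}(k)}\subseteq C_{\iota^{-1}(k)}$ for every $k$, then minimality forces $\cC'=\cC$ and $\iota'=\iota$ (if some containment were strict, condition (ii) of minimality gives an edge covered in $\cC$ only by that member and not by its shrunken image, yet that edge must be covered by $\cC'$, a contradiction); hence every distinct pair with fully shared attributes satisfies $\sum_k|W_k|\ge ||\cC||+1$ and contributes at most $m^{|\cC|}p^{||\cC||}\cdot p = o\bigl(m^{|\cC|}p^{||\cC||}\bigr)$ since $p^2\le mp^2\ll 1$. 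The remaining patterns (at least one unshared attribute) do follow your $(mp^2)^{c}$ mechanism, and the disjoint-attribute patterns contribute $(\E X)^2=o(\E X)$ because $\E X\ll 1$. With this case added, your proof is complete and correct.
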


The proof of Theorem~\ref{the:MotifDensity} is based on two auxiliary results which are presented first.

\begin{mylemma}
\label{the:IntersectionGraph}
For any intersection graph $G$ on $\{1,\dots,n\}$ generated by attribute sets $\cV = \{V_1, \dots, V_n\}$ and any graph $R$ with $V(R) \subset V(G)$, the following are equivalent:
\begin{enumerate}[(i)]
\item $R \subset G$.
\item $E(R) \Subset \cV$.
\item There exists a family $\cC \in \mcf(R)$ such that $E(R) \Subset \cC \Subset \cV$.
\end{enumerate}
\end{mylemma}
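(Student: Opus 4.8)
The plan is to reduce the graph-theoretic assertion to a statement about covering families and then to manufacture a minimal covering family by an extremal argument. The engine is the defining property of the intersection graph: for two distinct nodes $i,j$ one has $\{i,j\} \in E(G)$ precisely when some member of the generating family $\cV$ contains both $i$ and $j$ (namely the clique $\{l : k \in V_l\}$ of any attribute $k \in V_i \cap V_j$). Since $V(R) \subset V(G)$ is assumed, the inclusion $R \subset G$ is equivalent to $E(R) \subset E(G)$, and by this property the latter says exactly that every edge of $R$ lies inside some member of $\cV$, that is, $E(R) \Subset \cV$. This settles the equivalence of (i) and (ii) in one line. The implication (iii) $\Rightarrow$ (ii) is then immediate from transitivity of $\Subset$: if $E(R) \Subset \cC$ and $\cC \Subset \cV$, every edge sits in a member of $\cC$, which in turn sits in a member of $\cV$.

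The substance lies in (ii) $\Rightarrow$ (iii), where from the mere existence of a covering family I must extract a minimal one $\cC \in \mcf(R)$ that is still dominated by $\cV$. First I would restrict attention to node sets supported on $V(R)$, replacing each $V \in \cV$ by $V \cap V(R)$; this preserves the covering of $E(R)$ (every edge of $R$ is contained in $V(R)$) and the relation $\Subset \cV$ (since $V \cap V(R) \subset V$). Let $\cS$ be the resulting nonempty collection of all families $\cB$ of subsets of $V(R)$ with $E(R) \Subset \cB \Subset \cV$, and choose $\cC \in \cS$ that is lexicographically minimal for the pair $(|\cB|, ||\cB||)$.

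It then remains to verify that this extremal $\cC$ belongs to $\mcf(R)$. For minimality condition (i), if $\cC$ stayed a covering family of $E(R)$ after deleting one of its members, that smaller family would still lie in $\cS$ and contradict the minimality of $|\cC|$; in particular no member can have size at most one, since such a set covers no edge and would be deletable, so every member of $\cC$ has size at least two. For condition (ii), if some member $C$ could be replaced by a proper subset $C'$ while the modified family still covers $E(R)$, that family again lies in $\cS$ (the relation $\Subset \cV$ survives because $C' \subset C \subset V$) and has either strictly smaller $|\cC|$ (if $C'$ duplicates an existing set) or the same $|\cC|$ and strictly smaller $||\cC||$ — in either case contradicting lexicographic minimality. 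Hence $\cC \in \mcf(R)$ with $E(R) \Subset \cC \Subset \cV$, which is (iii).

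The main obstacle is organisational rather than deep: one must check that reaching minimality never destroys the domination $\cC \Subset \cV$, and that the preliminary restriction to $V(R)$ is exactly what makes the extracted family a genuine member of $\mcf(R)$ (whose sets are implicitly subsets of $V(R)$). Choosing $\cC$ lexicographically minimal for $(|\cC|, ||\cC||)$ is what lets both minimality conditions be discharged at once, sidestepping a more delicate iterative pruning.
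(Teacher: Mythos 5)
Your proof is correct and follows essentially the same route as the paper's: (i)$\iff$(ii) from the defining property of the intersection graph, (iii)$\implies$(ii) by transitivity of $\Subset$, and (ii)$\implies$(iii) by restricting each $V_j$ to $V_j \cap V(R)$ and then reducing the resulting covering family to a minimal one. The only difference is cosmetic: the paper reaches minimality by iterative pruning (repeatedly deleting removable members, then shrinking shrinkable ones), whereas you pick a family lexicographically minimal in $(|\cC|, ||\cC||)$; both devices discharge the two minimality conditions while preserving $\cC \Subset \cV$.
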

\begin{proof}
(i)$\iff$(ii). Observe that a node pair $e \in \binom{V}{2}$ satisfies $e \in E(G)$ if and only if $e \subset V_j$ for some $V_j \in \cV$. Hence $E(R) \subset E(G)$ if and only if for every $e \in E(R)$ there exists $V_j \in \cV$ such that $e \subset V_j$, or equivalently, $E(R) \Subset \cV$.

(ii)$\implies$(iii). If $E(R) \Subset \cV$, define $C_j = V_j \cap V(R)$. Then $\cC = \{C_1, \dots, C_m\}$ is a covering family of $E(R)$. Then test whether $\cC$ still remains a covering family of $E(R)$ if one its members is removed. If yes, remove the member of $\cC$ with the highest label. Repeat this procedure until we obtain a covering family $\cC'$ of $E(R)$ for which no member can be removed. Then test whether some $C \in \cC'$ can be replaced by a strict subset of $C$. If yes, do this replacement, and repeat this procedure until we obtain a covering family $\cC''$ of $E(R)$ for which no member can be shrunk in this way. This mechanism implies that $\cC''$ is a minimal covering family of $E(R)$, for which $E(R) \Subset \cC'' \Subset \cV$.

(iii)$\implies$(ii). Follows immediately from the transitivity of $\Subset$.
\end{proof}

\begin{mylemma}
\label{the:Covering}
If $mp^2 \ll 1$, then for any scale-independent finite collection $\cC = \{C_1,\dots, C_t\}$ of finite subsets of $\{1,2,\dots\}$ of size at least 2, the probability that the family of attribute sets $\cV = \{V_1, \dots, V_n\}$ of $G=G(n,m,p)$ is a covering family of $\cC$ satisfies
\[
 \pr( \cV \Supset \cC )
 \wsim m^{|\cC|} p^{||\cC||}.
\]
\end{mylemma}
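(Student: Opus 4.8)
The plan is to reduce the statement to an asymptotic-independence claim for the coverage events of the individual members of $\cC$, establish the resulting lower bound by positive association, and obtain the matching upper bound by a union bound over attribute assignments. Throughout, index the $m$ attribute classes $\{i:B(i,k)=1\}$ by $k$, and for a member $C_s\in\cC$ write $A_s$ for the event that some attribute class contains $C_s$, so that $\{\cV\Supset\cC\}=\bigcap_{s=1}^{t}A_s$. The number of attribute classes containing a fixed $C_s$ is binomial with mean $mp^{|C_s|}$, whence $\pr(A_s)=1-(1-p^{|C_s|})^{m}$. Since every $|C_s|\ge 2$ and $mp^{2}\ll 1$, we have $mp^{|C_s|}\le mp^{2}\ll 1$, and therefore $\pr(A_s)\wsim mp^{|C_s|}$. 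Because $\prod_{s=1}^{t}mp^{|C_s|}=m^{|\cC|}p^{||\cC||}$, it suffices to prove the asymptotic factorisation $\pr(\bigcap_s A_s)\wsim\prod_s\pr(A_s)$.

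For the lower bound I would invoke positive association. Each $A_s$ is an increasing event in the independent Bernoulli family $\{B(i,k)\}$, since raising any $B(i,k)$ from $0$ to $1$ can only create additional coverage. Hence the Harris (FKG) inequality yields $\pr(\bigcap_s A_s)\ge\prod_s\pr(A_s)\wsim m^{|\cC|}p^{||\cC||}$, which already matches the claimed order.

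The upper bound is where the work lies. If $\bigcap_s A_s$ occurs then there is a witness map $\phi\colon\{1,\dots,t\}\to\{1,\dots,m\}$ with $C_s$ contained in attribute class $\phi(s)$ for every $s$, so a union bound gives $\pr(\bigcap_s A_s)\le\sum_{\phi}\pr\big(C_s\subseteq\text{class }\phi(s)\ \text{for all }s\big)$. Grouping the indices by the fibres of $\phi$ and using independence across distinct attributes, the summand factorises as $\prod_{k\in\operatorname{im}\phi}p^{|\bigcup_{s\in\phi^{-1}(k)}C_s|}$. The injective maps contribute $m(m-1)\cdots(m-t+1)\,p^{||\cC||}\le m^{|\cC|}p^{||\cC||}$, which matches the lower bound.

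The main obstacle is to show that the non-injective maps are negligible. A fibre $S=\phi^{-1}(k)$ of size at least two replaces the product $\prod_{s\in S}p^{|C_s|}$ by the single factor $p^{|\bigcup_{s\in S}C_s|}$, a relative increase of $p^{-\Delta_S}$ with $\Delta_S=\sum_{s\in S}|C_s|-|\bigcup_{s\in S}C_s|\ge 0$, while using one attribute in place of $|S|$ costs a factor of order $m^{-(|S|-1)}$. Such a merge is therefore of lower order exactly when $m^{|S|-1}p^{\Delta_S}\to\infty$: when the members of $\cC$ are pairwise disjoint one has $\Delta_S=0$ and the factor $m^{-(|S|-1)}\to 0$ disposes of every non-injective term, so that $\pr(\bigcap_s A_s)\le(1+o(1))m^{|\cC|}p^{||\cC||}$ and the two bounds coincide. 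Quantifying these merged contributions—tracking the balance between the overlaps $\Delta_S$ and the sparsity $mp^2\ll 1$—is the delicate step, and it is precisely here that the separation structure of the family $\cC$ must be used.
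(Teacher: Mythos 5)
Your lower bound is fine: each coverage event $A_s$ is increasing in the independent indicators $B(i,k)$, so Harris/FKG gives $\pr(\bigcap_s A_s)\ge\prod_s\pr(A_s)\sim m^{|\cC|}p^{||\cC||}$, and this is a cleaner route to that half than the paper takes. The upper bound, however, is left open at exactly the point where the difficulty lives, and you should be aware that the gap cannot be closed by working harder: your own merge criterion $m^{|S|-1}p^{\Delta_S}\to\infty$ already shows that the claimed asymptotics are \emph{false} for general $\cC$ under the sole hypothesis $mp^2\ll 1$. Take $\cC=\{\{1,2,3\},\{1,3,4\}\}$ (which occurs in the paper as a minimal covering family of the 4-cycle): a single attribute class containing $\{1,2,3,4\}$ covers both members, so $\pr(\cV\Supset\cC)\ge 1-(1-p^4)^m\sim mp^4$, whereas the claimed value $m^2p^6=mp^4\cdot(mp^2)$ is smaller by the factor $mp^2\ll1$. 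Even a one-point overlap breaks the factorisation in the regime $p\sim\mu m^{-1}$ the paper actually works in: for $\cC=\{\{1,2\},\{1,3\}\}$ (a minimal covering family of the 2-star) one gets $\pr(\cV\Supset\cC)\sim mp^3+m^2p^4\sim(1+\mu^{-1})\,m^2p^4$, off by a constant factor. So the lemma needs an extra hypothesis --- pairwise disjointness of the $C_s$, or more precisely that every merged term $m\,p^{|\bigcup_{s\in S}C_s|}$ with $|S|\ge2$ is $o\bigl(m^{|S|}p^{\sum_{s\in S}|C_s|}\bigr)$ --- and under such a hypothesis your union bound over witness maps does complete the proof.

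For comparison, the paper's own proof is much shorter but does not engage with this issue: it declares the binomial counts $N_s=\sum_{j}1(V_j\supset C_s)$ to be asymptotically independent by citing Lemmas 1--2 of Karonski, Scheinerman and Singer-Cohen, and multiplies the marginals. That asymptotic independence is precisely what your non-injective witness maps quantify, and it fails whenever one attribute class can cover two overlapping members of $\cC$ at non-negligible cost; the downstream Theorem~\ref{the:MotifDensity} survives only because the sum over all minimal covering families absorbs these same-order cross terms. Your approach is the more careful one and has correctly located the real obstruction, but as a proof of the lemma as stated it is incomplete; to finish, either add the disjointness-type condition above and verify it for the families the paper actually uses, or restate the lemma with the single-attribute merge terms included.
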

\begin{proof}
For $s=1,\dots, t$, denote by $N_s = \sum_{j=1}^m 1(V_j \supset C_s)$ the number of attribute sets covering $C_s$. Note that $N_s$ follows a binomial distribution with parameters $m$ and $p^{|C_s|}$. Because $|C_s| \ge 2$, it follows that the mean of $N_s$ satisfies $m p^{|C_s|} \le mp^2 \ll 1$. Using elementary computations related to the binomial distribution (see e.g.\ \cite[Lemmas 1,2]{Karonski_Scheinerman_Singer-Cohen_1999}) it follows that the random integers $N_1,\dots,N_t$ are asymptotically independent with $\pr(N_s \ge 1) \sim m p^{|C_s|}$, so that 
\[
 \pr(\cV \Supset \cC)
 \weq \pr(N_1 \ge 1, \dots, N_t \ge 1)
 \wsim \prod_{s=1}^t \pr(N_s \ge 1) 
 \wsim m^t p^{\sum_{s=1}^t |C_s|}.
\]
\end{proof}

\begin{proof}[Proof of Theorem~\ref{the:MotifDensity}]
By Lemma~\ref{the:IntersectionGraph}, we see that
\[
 \pr(G \supset R)
 \weq \pr \left( \bigcup_{\cC \in \mcf(R)} \{\cV \Supset \cC\} \right).
\]
Bonferroni's inequalities hence imply $U_1 - U_2 \le \pr(G \supset R) \le U_1$, where
\[
 U_ 1
 \weq  \sum_{\cC \in \mcf(R)} \pr(\cV \Supset \cC)
 \qquad\text{and}\qquad
 U_2
 \weq \sum_{\cC, \cD} \pr(\cV \Supset \cC, \cV \Supset \cD),
\]
and the latter sum is taken over all unordered pairs of distinct minimal covering families $\cC, \cD \in \mcf(R)$. Note that by Lemma~\ref{the:Covering},
\[
 U_1 \wsim \sum_{\cC \in \mcf(R)} m^{|\cC|} p^{||\cC||},
\]
so to complete the proof it suffices to verify that $U_2 \ll U_1$.

Fix some minimal covering families $\cC = \{C_1,\dots, C_s\}$ and $\cD = \{D_1,\dots, D_t\}$ of $E(R)$ such that $\cC \ne \cD$. Then either $\cC$ has a member such that $C_i \not\in \cD$, or 
$\cD$ has a member such that $D_j \not\in \cC$. In the former case $\cC \cup \cD \supset \{C_i,D_1, \dots, D_t\}$, so that by Lemma~\ref{the:Covering},
\begin{align*}
 \pr( \cV \Supset \cC \cup \cD )
 \wle \pr( \cV \Supset \{C_i,D_1, \dots, D_t\} )
 &\wsim m^{t+1} p^{|C_i| + \sum_{j=1}^t |D_j|} \\
 &\wsim m p^{|C_i|} \pr(\cV \Supset \cD).
\end{align*}
Because $\cC$ is a minimal covering family, $|C_i| \ge 2$, and $m p^{|C_i|} \le mp^2 \ll 1$, and hence 
$\pr( \cV \Supset \cC \cup \cD ) \ll \pr(\cV \Supset \cD)$.
In the latter case where $\cD$ has a member such that $D_j \not\in \cC$, a similar reasoning shows that $\pr( \cV \Supset \cC \cup \cD ) \ll \pr(\cV \Supset \cC)$. We may hence conclude that
\[
 \pr(\cV \Supset \cC, \cV \Supset \cD)
 \weq \pr( \cV \Supset \cC \cup \cD )
 \ \ll \ \pr(\cV \Supset \cC) + \pr(\cV \Supset \cD)
\]
for all distinct $\cC, \cD \in \mcf(R)$. Therefore, the proof is completed by
\[
 U_2
 \ \ll \ \sum_{\cC,\cD} \Big( \pr(\cV \Supset \cC) + \pr(\cV \Supset \cD) \Big)
 \wle 2 |\mcf(R)| U_1.
\]
\end{proof}

\subsection{Covering densities of certain subgraphs}
In order to bound the variances of subgraph counts we will use the covering densities of (partially) overlapping pairs of 2-stars and triangles. Figure~\ref{fig:Triangles} displays the graphs obtained as a union of two partially overlapping triangles. Figure~\ref{fig:2-stars} displays the graphs produced by overlapping 2-stars.

\begin{figure}[h]
\begin{center}
\includegraphics[width=.5\textwidth]{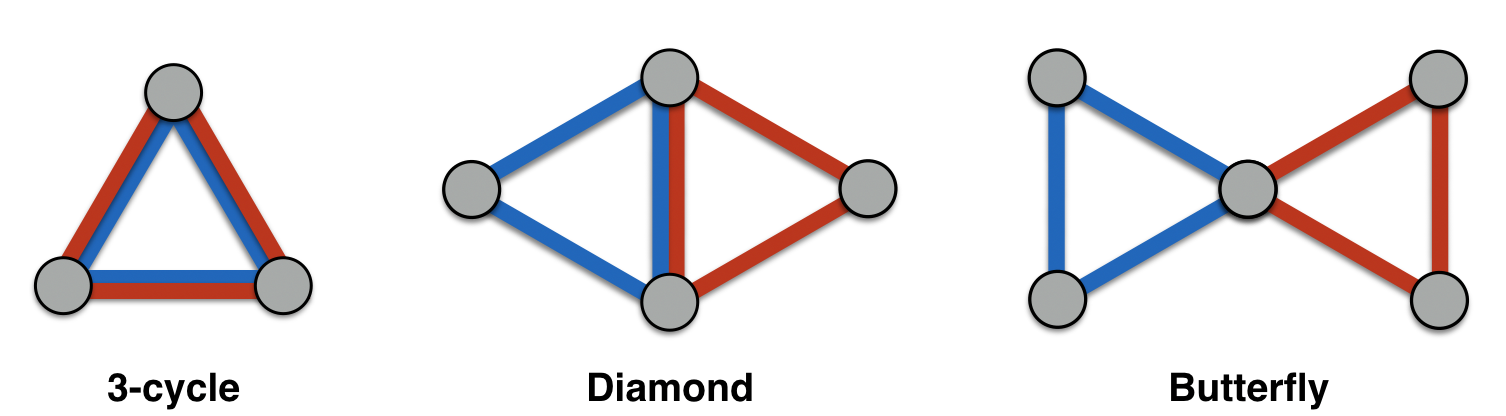}
\caption{\label{fig:Triangles} Graphs obtained as unions of overlapping triangles.}
\end{center}
\end{figure}

\begin{figure}[h]
\begin{center}
\includegraphics[width=.5\textwidth]{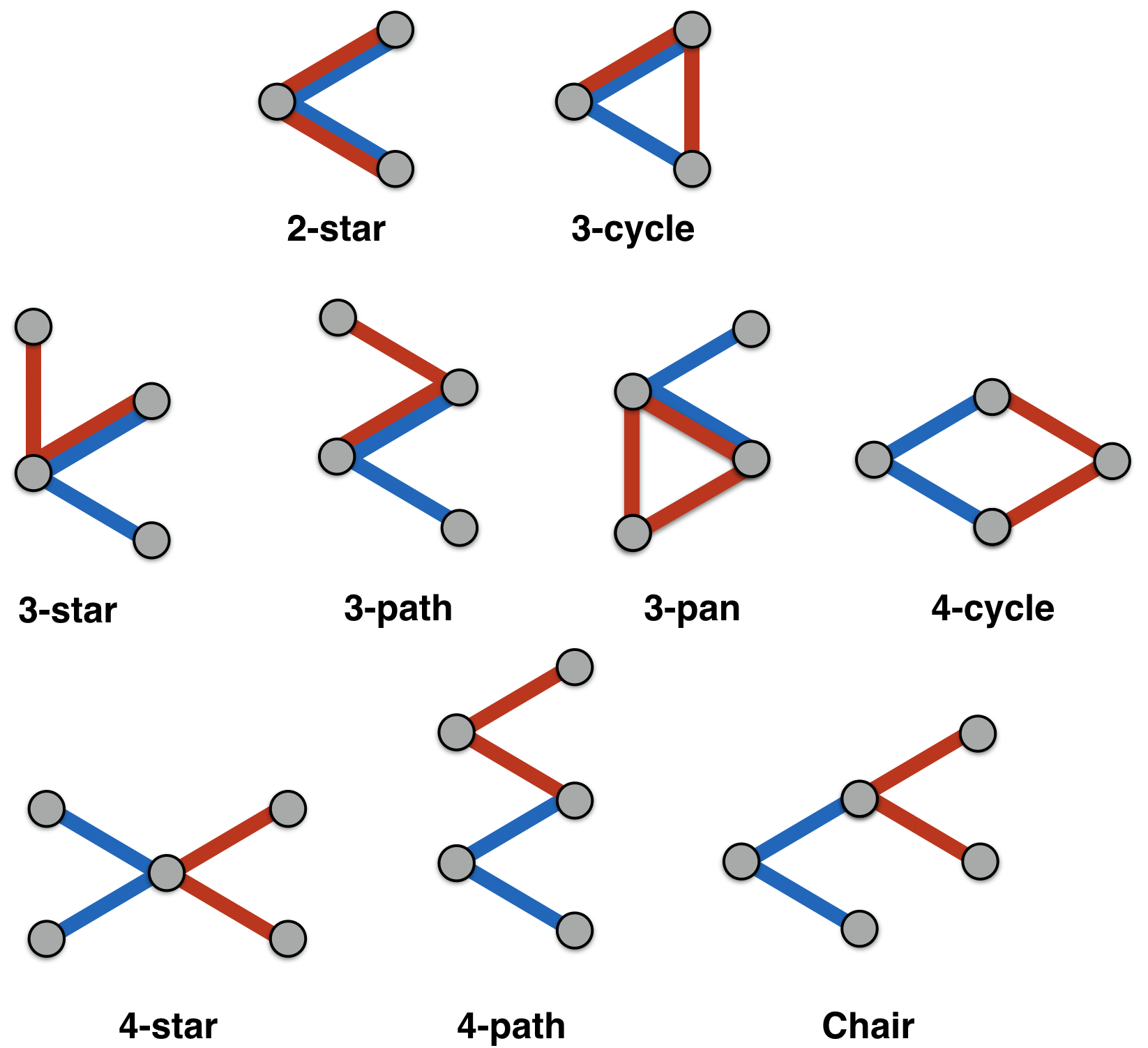}
\caption{\label{fig:2-stars} Graphs obtained as unions of overlapping 2-stars.}
\end{center}
\end{figure}

According to Theorem~\ref{the:MotifDensity}, the covering densities of subgraphs may be computed from their minimal covering families.  For a triangle $R$ with $V(R) = \{1,2,3\}$ and $E(R) = \{12,13,23\}$, the minimal covering families are\footnote{For clarity, we write 12 and 123 as shorthands of the sets $\{1,2\}$ and $\{1,2,3\}$.} $\{123\}$ and $\{12,13,23\}$. The minimal covering families of the a 3-path $R$ with $V(R) = \{1,2,3,4\}$ and $E(R) = \{12,23,34\}$ are given by $\{1234\}$, $\{12,234\}$, $\{123,34\}$, and $\{12,23,34\}$.

The covering densities of stars are found as follows.
Fix $r \ge 1$, and let $R$ be the $r$-star such that $V(R) = \{1,2,\dots,r+1\}$ and $E(R) = \{\{1,r+1\},\{2,r+1\},\dots,\{r,r+1\}\}$. The minimal covering families of $R$ are of the form $\cC = \{ S \cup \{r+1\}:  S \in \cS \}$, where $\cS$ is a partition of the leaf set $\{1,\dots,r\}$ into nonempty subsets. For any such $\cC$ we have $|\cC| = |\cS|$ and $||\cC|| = r + |\cS|$. Hence
\[
 \pr(G \supset \text{$r$-star})
 \wsim \sum_{k=1}^r {r \brace k} m^k p^{k+r},
\]
where ${r \brace k}$ equals the number of partitions of $\{1,\dots,r\}$ into $k$ nonempty sets. These coefficients are known as Stirling numbers of the second kind \cite{Graham_Knuth_Patashnik_1994} and can be computed via ${r \brace k} = \frac{1}{k!} \sum_{j=0}^k (-1)^{k-j} \binom{k}{j}j^r$. Hence,
\[
 \pr(G \supset \text{$r$-star})
 \wsim
 \begin{cases}
   mp^3 + m^2 p^4, &\quad r=2,\\
   mp^4 + 3m^2 p^5 + m^3 p^6, &\quad r=3,\\
   mp^5 + 7m^2 p^6 + 6m^3 p^7 + m^4 p^8, &\quad r=4.
 \end{cases}
\]

Table~\ref{tab:Summary} summarises approximate covering densities of overlapping pairs of 2-stars and triangles. The table is computed by first listing all minimal covering families of the associated subgraphs, as shown in Table~\ref{tab:ManyOtherOther}. We also use the following observations (for $p \ll m^{-1/2} \ll 1$) to cancel some of the redundant terms in the expressions.

4-path: $m^2p^7 \ll m^2 p^6$ and $m^3 p^8 \ll m^3 p^7$

4-cycle: $m^2 p^6 \ll mp^4$ 

3-pan: $m^3 p^7 \ll m^2 p^5$

Diamond: $m^2 p^6 \ll mp^4$ and $m^4 p^9 \ll m^3 p^7$

Butterfly: $m^5 p^{12} \ll m^5 p^{11}$, $m^4 p^{11} \ll m^3 p^8 \ll m^2 p^6$, $m^3 p^{10} \le m^3 p^8 \ll m^2 p^6$.

\begin{table}[h]
\small
\begin{center}
\begin{tabular}{lccll}
\toprule
$R$ & $|V(R)|$ & $|E(R)|$ & Appr.\ density ($p \ll m^{-1/2} \ll 1$) & Appr.\ density ($p \sim \mu m^{-1}$) \\
\midrule
1-star & 2 & 1 & $ mp^2$ & $\mu^2 m^{-1}$ \\[1.5ex]
2-star & 3 & 2 & $ mp^3 + m^2 p^4$ & $(1+\mu)\mu^3 m^{-2}$ \\
3-cycle & 3 & 3 & $ mp^3 + m^3 p^6$ & $\mu^3 m^{-2}$  \\[1.5ex]
3-star    & 4 & 3 & $ mp^4 + 3m^2 p^5 + m^3 p^6$ & $(1+3\mu+\mu^2)\mu^4 m^{-3}$ \\
3-path   & 4 & 3 & $ mp^4 + 2m^2 p^5 + m^3 p^6$ & $(1+2\mu+\mu^2)\mu^4 m^{-3}$ \\
4-cycle  & 4 & 4 & $ mp^4 + 4m^3 p^7 + m^4 p^8$ & $\mu^4 m^{-3}$ \\
3-pan & 4 & 4 & $ mp^4 + m^2 p^5 + m^4 p^8$ & $(1+\mu)\mu^4 m^{-3}$ \\
Diamond & 4 & 5 & $ mp^4 + 2 m^3 p^7 + m^5 p^{10}$ & $\mu^4 m^{-3}$\\[1.5ex]
4-star    & 5 & 4 & $ mp^5 + 7m^2 p^6 + 6 m^3 p^7 + m^4 p^8$ & $(1+7\mu+6\mu^2+\mu^3)\mu^5 m^{-4}$ \\
4-path   & 5 & 4 & $ mp^5 + 3 m^2 p^6 + 3 m^3 p^7 + m^4 p^8$ & $(1+3\mu+3\mu^2+\mu^3)\mu^5 m^{-4}$ \\
Chair & 5 & 4 & $ mp^5 + 4 m^2 p^6 + 4 m^3 p^7 + m^4 p^8$ & $(1+4\mu+4\mu^2+\mu^3)\mu^5 m^{-4}$ \\
Butterfly & 5 & 6 & $ mp^5 + m^2 p^6 + 2 m^4 p^9 + 4 m^5 p^{11} + m^6 p^{12}$ & $(1+\mu)\mu^5 m^{-4}$ \\
\bottomrule
\end{tabular}
\end{center}
\caption{\label{tab:Summary} Approximate densities of some subgraphs.}
\end{table}

\begin{table}[h]
\begin{center}
\scriptsize
\adjustbox{valign=t}{\begin{minipage}{0.32\textwidth}
\TableThreePath\\[4ex]
\TableFourCycle\\[4ex]
\TableDiamond
\end{minipage}}
\adjustbox{valign=t}{\begin{minipage}{0.32\textwidth}
\TableThreeCycle\\[4ex]
\TableChair\\[4ex]
\TableFourPath\\[4ex]
\end{minipage}}
\adjustbox{valign=t}{\begin{minipage}{0.32\textwidth}
\TableThreePan\\[4ex]
\TableButterfly
\end{minipage}}
\end{center}
\caption{\label{tab:ManyOtherOther} Minimal covering families of the subgraphs in Fig.~\ref{fig:Triangles} and Fig.~\ref{fig:2-stars} (stars excluded).}
\end{table}

\subsection{Proofs of Theorems \ref{the:EstimatorLambda}, \ref{the:Transitivity}, and \ref{the:EstimatorMu}}
\begin{proof}[of Theorem \ref{the:EstimatorLambda}]
Denote $\hat\lambda = \hat\lambda(\Gsub)$ and $\hat N = N_{K_2}(\Gsub)$.  Then the variance of $\hat\lambda$
is given by
\begin{equation}
 \label{eq:LambdaVar}
 \Var(\hat\lambda)
 \weq 4 \frac{n^2}{n_0^4} \Var(\hat N).
\end{equation}
By writing
\[
 \hat N = \sum_{e \in \binom{[n_0]}{2}} 1(G \supset e)
 \qquad\text{and}\qquad 
 \hat N^2 = \sum_{e \in \binom{[n_0]}{2}} \sum_{e' \in \binom{[n_0]}{2}} 1(G \supset e) 1(G \supset e'),
\]
we find that
$
 \E \hat N
 = \binom{n_0}{2} \pr(G \supset K_2)
$
and
\[
 \E \hat N^2
 \weq \binom{n_0}{2} \pr(G \supset K_2) + 2(n_0-2) \binom{n_0}{2} \pr(G \supset S_2) + \binom{n_0}{2} \binom{n_0-2}{2} \pr(G \supset K_2)^2.
\]
Because the last term above is bounded by
\[ 
 \binom{n_0}{2} \binom{n_0-2}{2} \pr(G \supset K_2)^2
 \wle \binom{n_0}{2}^2 \pr(G \supset K_2)^2
 \weq (\E \hat N)^2,
\]
it follows that
\begin{align*}
 \Var(\hat N)
 &\wle \binom{n_0}{2} \pr(G \supset K_2) + 2(n_0-2) \binom{n_0}{2} \pr(G \supset S_2) \\
 &\weq (1+o(1)) \frac{1}{2} n_0^2 mp^2 \ + \ (1+o(1)) n_0^3 (mp^3 + m^2 p^4).
\end{align*}
Hence by \eqref{eq:LambdaVar},
\begin{align*}
 \Var(\hat \lambda)
 &\weq O(n_0^{-2} n^2 mp^2) \ + \ O(n_0^{-1} n^2 mp^3) + O(n_0^{-1} n^2 m^2 p^4),
\end{align*}
and by noting that $n^2 mp^2 \sim \lambda n$, $n^2 m p^3 = m^{-1/2} n^{1/2} (nmp^2)^{3/2} \sim \lambda^{3/2} m^{-1/2} n^{1/2}$ and $n^2 mp^4 = (nmp^2)^2 \sim \lambda^2$, we find that
\[
 \Var(\hat \lambda)
 \weq O\left( n_0^{-2} n + m^{-1/2}  n_0^{-1} n^{1/2} + n_0^{-1} \right)
 \weq O\left( n_0^{-2} n + m^{-1/2}  n_0^{-1} n^{1/2} \right),
\]
where the last equality is true because $n_0^{-2} n \ge n_0^{-1}$. The claim now follows by Chebyshev's inequality.
\end{proof}

\begin{proof}[of Theorem \ref{the:Transitivity} and Theorem \ref{the:EstimatorMu}]
The variances of $N_{S_2}$ and $N_{K_3}$ can be bounded from above in the same way that the variance of $N_{K_2}$ was bounded in the proof of Theorem \ref{the:EstimatorLambda}. 
The overlapping subgraphs contributing to the variance of $N_{K_3}$ are those shown in Fig.~\ref{fig:Triangles}. According to Table \ref{tab:Summary}, the contribution of these subgraphs is  $O(n_0^{|V(R)|}m^{-|V(R)|+1})$ for $|V(R)|=3,4,5$, and the nonoverlapping triangles contribute $O(n_0^{6}m^{-5})$. Since $\E N_{K_3}$ is of the order $n_0^3 m^{-2} $, it follows that $\Var (N_{K_3}/\E N_{K_3}) = o(1)$ for $n_0 \gg n^{2/3}$.

The same line of proof works for $N_{S_2}$, i.e., we note that the subgraphs appearing in $\Var(N_{S_2})$ are those shown in Fig.~\ref{fig:2-stars} and their contributions to the variance are listed in Table \ref{tab:Summary}. Again, it follows that $\Var (N_{S_2}/\E N_{S_2}) = o(1)$ for $n_0 \gg n^{2/3}$. Hence we may conclude using Chebyshev's inequality that
\begin{align*}
 N_{K_3}(\Gsub) &\weq (1+o_p(1)) \E N_{K_3}(\Gsub) \weq (1+o_p(1)) \binom{n_0}{3} \mu^3 m^{-2} \\
 N_{S_2}(\Gsub) &\weq (1+o_p(1)) \E N_{S_2}(\Gsub) \weq (1+o_p(1)) 3 \binom{n_0}{3}(1+\mu) \mu^3 m^{-2},
\end{align*}
and the claim of Theorem \ref{the:Transitivity} follows.

Further, in the proof of Theorem \ref{the:EstimatorLambda} we found that
\[
 N_{K_2}(\Gsub) \weq (1+o_p(1)) \E N_{K_2}(\Gsub) \weq (1+o_p(1)) \binom{n_0}{2} \mu^2 m^{-1}.
\]
Hence the claims of Theorem \ref{the:EstimatorMu} follow from the above expressions combined with the continuous mapping theorem.
\end{proof}

\section{Conclusions}
\label{sec:Conlusions}

In this paper we discussed the estimation of parameters for a large random intersection graph model in a balanced sparse parameter regime characterised by mean degree $\lambda$ and attribute intensity $\mu$, based on a single observed instance of a subgraph induced by a set of $n_0$ nodes. We introduced moment estimators for $\lambda$ and $\mu$ based on observed frequencies of 2-stars and triangles, and described how the estimators can be computed in time proportional to the product of the maximum degree and the number of observed nodes.  We also proved that in this parameter regime the statistical network model under study has a nontrivial empirical transitivity coefficient which can be approximated by a simple parametric formula in terms of $\mu$. 

For simplicity, our analysis was restricted to binomial undirected random intersection graph models, and the statistical sampling scheme was restricted induced subgraph sampling, independent of the graph structure. Extension of the obtained results to general directed random intersection graph models with general sampling schemes is left for further study and forms a part of our ongoing work.

{\small
\subsubsection*{Acknowledgments.} Part of this work has been financially supported by the Emil Aaltonen Foundation, Finland. We thank Mindaugas Bloznelis for helpful discussions, and the two anonymous reviewers for helpful comments.
}

\bibliographystyle{splncs03}
\bibliography{lslReferences}
\end{document}